\numberwithin{equation}{section}
\numberwithin{equation}{section}		
\numberwithin{figure}{section}			
\numberwithin{table}{section}				
\newcommand{\babs}[1]{\left|{#1}\right|}
\newcommand{\vect}[1]{\boldsymbol{\mathbf{#1}}}
\newcommand{\R}{\mathbb{R}}
\newcommand{\N}{\mathbb{N}}
\DeclareFontShape{T1}{lmr}{b}{sc}{<->ssub*cmr/bx/sc}{}
\DeclareFontShape{T1}{lmr}{bx}{sc}{<->ssub*cmr/bx/sc}{}
\newcommandx{\unsure}[2][1=]{\todo[linecolor=red,backgroundcolor=red!25,bordercolor=red,#1]{#2}}
\newcommandx{\change}[2][1=]{\todo[linecolor=blue,backgroundcolor=blue!25,bordercolor=blue,#1]{#2}}
\newcommandx{\info}[2][1=]{\todo[linecolor=OliveGreen,backgroundcolor=OliveGreen!25,bordercolor=OliveGreen,#1]{#2}}
\newcommandx{\improvement}[2][1=]{\todo[linecolor=black,backgroundcolor=black!25,bordercolor=black,#1]{#2}}
\newcommandx{\thiswillnotshow}[2][1=]{\todo[disable,#1]{#2}}
\crefname{proposition}{Proposition}{Propositions}
\crefname{equation}{}{}
\newtheorem{theorem}{Theorem}[section]
\newtheorem{lemma}[theorem]{Lemma}
\theoremstyle{definition}
\newtheorem{definition}[theorem]{Definition}
\newtheorem{remark}[theorem]{Remark}
\crefname{assumption}{Assumption}{Assumptions}
\crefname{definition}{Definition}{Definitions}
\crefname{corollary}{Corollary}{Corollaries}
\crefname{enumi}{item}{items}
\DeclareMathOperator{\C}{\mathbb{C}}
\renewcommand{\bar}[1]{\overline{#1}}
\renewcommand{\epsilon}{\varepsilon}
\begin{document}

\title[Spectra and pseudo-spectra of tridiagonal $k$-Toeplitz matrices
]{Spectra and pseudo-spectra of tridiagonal $k$-Toeplitz matrices and the topological origin of the non-Hermitian skin effect }

 \author[H. Ammari]{Habib Ammari}
\address{\parbox{\linewidth}{Habib Ammari\\
 ETH Z\"urich, Department of Mathematics, Rämistrasse 101, 8092 Z\"urich, Switzerland.}}
\email{habib.ammari@math.ethz.ch}
\thanks{}

\author[S. Barandun]{Silvio Barandun}
 \address{\parbox{\linewidth}{Silvio Barandun\\
 ETH Z\"urich, Department of Mathematics, Rämistrasse 101, 8092 Z\"urich, Switzerland.}}
 \email{silvio.barandun@sam.math.ethz.ch}

\author[Y. De Bruijn]{Yannick De Bruijn}
 \address{\parbox{\linewidth}{Yannick De Bruijn\\
 ETH Z\"urich, Department of Mathematics, Rämistrasse 101, 8092 Z\"urich, Switzerland.}}
\email{ydebruijn@student.ethz.ch}

\author[P. Liu]{Ping Liu}
 \address{\parbox{\linewidth}{Ping Liu\\
 ETH Z\"urich, Department of Mathematics, Rämistrasse 101, 8092 Z\"urich, Switzerland.}}
\email{ping.liu@sam.math.ethz.ch}

\author[C. Thalhammer]{Clemens Thalhammer}
 \address{\parbox{\linewidth}{Clemens Thalhammer\\
 ETH Z\"urich, Department of Mathematics, Rämistrasse 101, 8092 Z\"urich, Switzerland.}}
\email{cthalhammer@student.ethz.ch}

\maketitle

\begin{abstract}
    We establish new results on the spectra and pseudo-spectra of tridiagonal $k$-Toeplitz operators and matrices. In particular, we prove the connection between the winding number of the eigenvalues of the symbol function and the exponential decay of the associated eigenvectors (or pseudo-eigenvectors). Our results elucidate the topological origin of the non-Hermitian skin effect in general one-dimensional polymer systems of subwavelength resonators with imaginary gauge potentials, proving the observation and conjecture in \cite{dimerSkin}. We also numerically verify our theory for these systems.
\end{abstract}

\vspace{3mm}
\noindent
\textbf{Keywords.} Tridiagonal $k$-Topelitz operator, block-Toeplitz operator, Tridiagonal $k$-Laurent operator, pseudospectra, Coburn's lemma, non-Hermitian skin effect, gauge capacitance matrix, eigenmode condensation.\\


\noindent
\textbf{AMS Subject classifications.} 
15A18, 
15B05, 
81Q12.  

\section{Introduction}

Motivated by the study of the non-Hermitian skin effect in polymer systems of subwavelength resonators and its topological origin, in this paper we extend the theory on the spectra and pseudo-spectra of Toeplitz operators and matrices to tridiagonal $k$-Toeplitz operators and matrices. Our main goal is to establish a connection between, on one hand, the winding number of the eigenvalues of the symbol function of a tridiagonal $k$-Toeplitz operator and, on the other hand, the exponential decay of the associated  eigenvectors or pseudo-eigenvectors. By doing so, we elucidate the topological origin of the eigenmode condensation  at one edge of a polymer system of subwavelength resonators where an imaginary gauge potential is added inside the resonators to break Hermiticity. We show that, due to a nonzero total winding of the eigenvalues of the symbol of the $k$-Toeplitz operator corresponding to the polymer system, the eigenmodes exhibit exponential decay and condensate at one of the edges of the structure.

In addition to advancing our understanding of the non-Hermitian skin effect, 
which can be seen as the classical analogue of the non-Hermitian Anderson model \cite{hatano}, one of our key contributions in this paper is to optimally characterize the spectra of tridiagonal $k$-Toeplitz operators. Notably, we reveal the distinctive property that the eigenvectors of these operators undergo exponential decay. Using the notion of pseudo-spectrum, we rigorously justify the exponential decay of the pseudo-eigenvectors of the associated  
$k$-Toeplitz matrices and significantly generalize the results presented in \cite{gohberg1993basic, LUMSDAINE1998103, REICHEL1992153, trefethen.embree2005Spectra, bottcher.silbermann1999Introduction}.

Our paper is organized as follows. 
In Section \ref{sec: spectral_Theory_operator}, we first provide an optimal characterization of the spectra of tridiagonal $k$-Toeplitz and Laurent operators. Then we prove the existence of an exponentially decaying eigenvector for Topelitz operators. 
In Section \ref{sec: spectral_Theory_matrix}, we establish a connection between  the eigenvectors of a  tridiagonal $k$-Toeplitz operator and the associated tridiagonal $k$-Toeplitz matrix. In Section \ref{Sec: Topological_origin} we discuss the topological origin of the non-Hermitian skin effect in polymer systems of subwavelength resonators and numerically verify our findings on the spectra of tridiagonal $k$-Toeplitz operators.  In Section \ref{sec: concluding_remarks}, we make some concluding remarks and give some possible extensions of our present work.



\section{Spectra of tridiagonal $k$-Toeplitz operators}\label{sec: spectral_Theory_operator}

In this section, we characterize the spectrum of tridiagonal $k$-Toeplitz operators and prove that their associated eigenvectors exhibit exponential decay. This type of operators and their associated truncated operators, i.e. matrices, are crucial in the study of the one-dimensional polymer systems of subwavelength resonators \cite{ammari2023mathematical, ammari.davies.ea2021Functional, feppon.ammari2022Subwavelength, dimerSkin}. We define a \emph{tridiagonal $k$-Toeplitz operator} by

\begin{equation}\label{equ:tridiagonalktoeplitzop1}
    A = \begin{pmatrix}
        a_1 & b_1 & 0 & & & &   \\
        c_1 & a_2 & b_2 &  \ddots & \\
        0  & \ddots &  \ddots & \ddots  & \ddots \\
         & \ddots& c_{k-2} & a_{k-1}& b_{k-1} &\ddots \\
        & & \ddots & c_{k-1} & a_k  & b_k& \ddots\\
        & & & \ddots & c_{k} & a_1  & b_1& \ddots\\
        & & & & \ddots & \ddots & \ddots& \ddots \\
    \end{pmatrix},
\end{equation}
where $a_i,b_i,c_i\in\C$ for $i\in\N$ and $A_{ij}=0$ if $\vert i-j \vert > 1$ for $i,j\in\N$. Its finite, truncated, counterpart will be called a \emph{tridiagonal $k$-Toeplitz matrix}. 


To characterize the spectrum of tridiagonal $k$-Toeplitz operators, we introduce the notions of Fredholm operator and Fredholm index. Let $X$ be a Banach space and $\mathcal{B}(X)$ the set of bounded linear operators on $X$. We say that an operator $A \in \mathcal{B}(X)$ is a \emph{Fredholm operator} if $\operatorname{Im} A$ is a closed subspace in $X$ and
\[
\operatorname{dim} \mathrm{Ker}(A) < \infty,\quad  \operatorname{dim} \mathrm{Coker} A < \infty,
\]
where $\mathrm{Coker} A:= X/ \mathrm{Im} A$. The \emph{index} of the Fredholm operator $A$ is defined as 
\begin{equation}\label{defi:fredholmindex1}
\operatorname{Ind} A \coloneqq \operatorname{dimKer}A- \operatorname{dimCoker}A.
\end{equation}

\subsection{Tridiagonal $k$-Toeplitz operators}
Observe that the tridiagonal $k$-Toeplitz operator $A$ presented in (\ref{equ:tridiagonalktoeplitzop1}) can be reformulated as a tridiagonal block Toeplitz operator, where the blocks repeat in a $1$-periodic way:
\begin{equation}\label{equ:tridiagonalktoeplitzop2}
    A = \begin{pmatrix}
        A_0 & A_{-1} &  \\
        A_{1} & A_0 & \ddots  \\
         & \ddots & \ddots   \\
    \end{pmatrix},
\end{equation}
with
\begin{equation*}
    A_0 = \begin{pmatrix}
        a_1 & b_1 & 0 & \cdots &  0  \\
        c_1 & a_2 & b_2 &  \ddots &\vdots \\
        0  & \ddots &  \ddots & \ddots  & 0 \\
        \vdots & \ddots& c_{k-2} & a_{k-1}& b_{k-1}\\
         0&   \cdots & 0 & c_{k-1} & a_k \\       
    \end{pmatrix},  A_{-1} = \begin{pmatrix}
        0 & \cdots&\cdots&  0\\
        \vdots &\ddots & & \vdots \\
        0& &\ddots & \vdots\\
        b_k &0 &\cdots& 0
    \end{pmatrix}, A_{1}=\begin{pmatrix}
        0 & \cdots& 0 & c_k\\
        \vdots &\ddots & & 0 \\
        \vdots& &\ddots & \vdots\\
        0 &\cdots &\cdots& 0
    \end{pmatrix}.
\end{equation*}
The \emph{symbol} of the tridiagonal block Toeplitz operator \eqref{equ:tridiagonalktoeplitzop2} (and thus also of the tridiagonal $k$-Toeplitz operator \eqref{equ:tridiagonalktoeplitzop1})
is defined by 
\begin{align}
    f:S^1&\to \C^{k\times k}\nonumber\\
    z&\mapsto A_{-1}z^{-1} + A_0 + A_1z,
\end{align}
or explicitly
\begin{equation}\label{eq: symbol tridiagonal operator}
    f(z) = 
    \begin{pmatrix}
        a_1 & b_1 & 0 & \dots & 0 & c_k z \\
        c_1 & a_2 & b_2 & 0 & \dots & 0 \\
        0 & c_2 & a_3 & b_3 & \ddots & \vdots \\
        \vdots  & \ddots &  \ddots & \ddots & \ddots & 0 \\
        0 & \dots & 0 & c_{k-2} & a_{k-1}& b_{k-1}\\
        b_kz^{-1} & 0 &  \dots & 0 & c_{k-1} & a_k \\   
    \end{pmatrix}.
\end{equation}
We will write $A=T(f)$. In the following sections, we will characterize the spectrum of the operator $T(f)$ based on its symbol $f(z)$.

\subsection{Hardy Spaces}

To get a better understanding of the connection between the symbol $f$  and the operator $T(f)$ itself, it is necessary to introduce the so-called \emph{Hardy Spaces}. Let $\mathbb T$ be the unit circle in $\mathbb C$.
\begin{definition}
    The Hardy spaces $H^2:=H^2(\mathbb{T})$ and $H^2_-:=H^2_-(\mathbb{T})$ are defined by
    \begin{align*}
        &H^2:=\{f\in L^2(\mathbb{T}): f_n=0 \text{ for } n<0\},\\
        &H^2_-:=\{f\in L^2(\mathbb{T}): f_n=0 \text{ for } n\geq0\},
    \end{align*}
    where $f_n$ denotes the Fourier coefficients of $f$.
\end{definition}
The spaces $H^2$ and $H^2_-$ are closed, orthogonal subspaces of $L^2\coloneqq L^2(\mathbb{T})$. Hence, one may write
\begin{equation*}
    L^2 = H^2\oplus H^2_-.
\end{equation*}
Denote by $P$ the projection of $L^2$ onto $H^2$. The functions 
\begin{equation*}
    \left\{\frac{1}{\sqrt{2\pi}} e^{in\theta}\right \}_{n=0}^\infty
\end{equation*}
form an orthonormal basis of $H^2$. One important property of functions in $H^2$ is given by the following theorem \cite[Theorem 6.13]{Douglas1998}.
\begin{theorem}[F. and M. Riesz]\label{thm: F. and M. Riesz}
    If f is a nonzero function in $H^2$, then the set $\{e^{it} \in \mathbb{T}:
    f(e^{it})=0\}$ has measure zero.
\end{theorem}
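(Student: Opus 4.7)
The plan is to deduce the zero-set statement from the stronger claim that $\log|f| \in L^1(\mathbb{T})$ for every nonzero $f \in H^2$: once this is established, the set $\{e^{it}: f(e^{it}) = 0\}$ cannot have positive Lebesgue measure, since on such a set $\log|f| = -\infty$, which would contradict integrability.

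First, I would identify $f$ with its holomorphic extension
$$F(z) := \sum_{n=0}^\infty f_n z^n$$
to the open unit disc $\mathbb{D}$. The $H^2$ hypothesis gives $\{f_n\}\in\ell^2$, so the series converges uniformly on compact subsets of $\mathbb{D}$. Since $f\not\equiv 0$, we have $F\not\equiv 0$, and by factoring out an appropriate monomial $z^m$ (which has modulus $1$ on $\mathbb{T}$ and thus does not alter the boundary zero set) I may assume $F(0)\neq 0$. Standard Hardy space theory (Fatou's theorem for $H^2$) provides radial boundary limits $F(re^{it})\to f(e^{it})$ as $r\to 1^-$ for almost every $t$, together with the uniform bound $\sup_{0<r<1}\|F(r\,\cdot\,)\|_{L^2(\mathbb{T})} <\infty$.

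The decisive step is Jensen's inequality for the subharmonic function $\log|F|$ on the disc of radius $r<1$:
$$\log|F(0)| \leq \frac{1}{2\pi}\int_0^{2\pi} \log|F(re^{it})| \, dt.$$
Splitting $\log = \log^+ - \log^-$ and using $\log^+ x \leq x$, the integral of $\log^+|F(re^{it})|$ is uniformly controlled by $\|F(r\,\cdot\,)\|_{L^1(\mathbb{T})} \leq C\,\|F(r\,\cdot\,)\|_{L^2(\mathbb{T})}$. Combined with Jensen's lower bound this gives
$$\int_0^{2\pi} \log^-|F(re^{it})| \, dt \;\leq\; \int_0^{2\pi} \log^+|F(re^{it})| \, dt - 2\pi\log|F(0)|$$
uniformly in $r$. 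Applying Fatou's lemma along a sequence $r_j\to 1^-$ to the nonnegative function $\log^-|F(r_j e^{it})|$ then delivers $\log^-|f|\in L^1(\mathbb{T})$, and together with $\log^+|f|\leq |f|\in L^1(\mathbb{T})$ this establishes $\log|f|\in L^1(\mathbb{T})$.

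The principal obstacle is the limiting step $r\to 1^-$: one must transfer the interior Jensen bound on $\log^-|F(re^{it})|$ into an integrability statement for the boundary function $f$, which requires combining the almost-everywhere radial convergence from Hardy space theory with Fatou's lemma in a way that the logarithmic singularities are controlled. Once $\log|f|\in L^1(\mathbb{T})$ is secured, the conclusion that the zero set has measure zero is immediate from the preliminary observation.
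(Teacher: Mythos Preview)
Your argument is correct and is essentially the standard proof of this boundary uniqueness theorem via Jensen's inequality and the integrability of $\log|f|$ on $\mathbb{T}$. The paper, however, does not prove this result at all: it simply quotes it as a classical theorem with a reference to \textsc{Douglas}, \emph{Banach Algebra Techniques in Operator Theory}, Theorem~6.13, and uses it as a black box in the proof of the tridiagonal Coburn lemma. So there is nothing to compare on the level of proof strategy; your proposal supplies what the paper deliberately omits.
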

It is well-known \cite{trefethen.embree2005Spectra} that for a Toeplitz operator $T(f)$ with symbol $f \in L^\infty$,  $T(f)$ is the matrix representation of the operator
\begin{equation*}
    H^2 \rightarrow H^2, g\mapsto P(fg)
\end{equation*}
in the above orthonormal basis. A similar notion holds for block Toeplitz operators. We denote by $(H^2)^k$ the $k$-dimensional vector space whose entries are elements of $H^2$. If $\Tilde{P}$ is the orthogonal projection of $(L^2)^k$ onto $(H^2)^k$ and $f\in (L^\infty)^{k\times k}$, then $T(f)$ is the matrix representation of the operator
\begin{equation*}
    (H^2)^k\rightarrow(H^2)^k,\quad \vect g \mapsto \Tilde{P}(f\vect g),
\end{equation*}
where $f\vect g$ is the usual matrix-vector multiplication. Note that the $(H^2)^k$ equivalent of Theorem \ref{thm: F. and M. Riesz} holds verbatim with $H^2$ replaced by $(H^2)^k$. With this in mind, we are now able to prove a generalization of Coburn's lemma (for the Toeplitz case, see \cite[Theorem 1.10]{Douglas1998}).

\begin{theorem}[Coburn's lemma; tridiagonal $k$-Toeplitz version]\label{thm: coburns lemma tridiagonal}
   Let $f \in \C^{k\times k}(\mathbb{T})$ be the symbol of a tridiagonal $k$-Toeplitz operator such that $\det(f(z))$ does not vanish identically on $\mathbb{T}$. Then one of the following statements holds:
   \begin{enumerate}[(i)]
       \item $T(f)$ has a trivial kernel;
       \item $T(f)$ has a dense range;
       \item The leading $(k-1)\times(k-1)$ principal minor of $A_0$ has a nonzero kernel. In particular, there exists some $z_0 \in \C\cup\{\infty\}$ such that $\ker(z_0^{-1}A_{-1}+A_0)\cap\ker(A_1)\neq \vect 0$.
   \end{enumerate}
\end{theorem}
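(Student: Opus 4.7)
The plan is to recast the block Toeplitz statement as a question about the $k$-periodic scalar tridiagonal recurrence naturally associated with $A$, and then to combine a transfer-matrix analysis with the rank-one structure of $A_{\pm 1}$.

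First, I translate the failure of (i) and (ii) into the scalar picture. Under the natural unitary identification $(H^2)^k\cong\ell^2(\mathbb N)$, the block Toeplitz $T(f)$ becomes the scalar tridiagonal operator $A$. Failure of (i) means that there is a nonzero $\phi\in\ell^2(\mathbb N)$ with
\[
c_{[j-1]}\phi_{j-1}+a_{[j]}\phi_j+b_{[j]}\phi_{j+1}=0 \quad (j\ge 1),\qquad \phi_0=0,
\]
so the fundamental solution $u$ with $u_0=0$, $u_1=1$ lies in $\ell^2$. Using $T(f)^*=T(f^*)$ and the identity $\overline{\operatorname{Im} T}=(\ker T^*)^{\perp}$, failure of (ii) similarly gives an $\ell^2$ fundamental solution $u^*$ of the adjoint recurrence $\bar c_{[j]}\psi_{j+1}+\bar a_{[j]}\psi_j+\bar b_{[j-1]}\psi_{j-1}=0$ with $\psi_0=0$, $\psi_1=1$. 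Introducing the $2\times 2$ period-$k$ monodromy $M = T_k\cdots T_1$ with $T_j=\begin{pmatrix}-a_{[j]}/b_{[j]} & -c_{[j-1]}/b_{[j]}\\ 1 & 0\end{pmatrix}$, one has $(u_{nk+1},u_{nk})^T = M^n e_1$, so $u\in\ell^2$ iff $e_1$ lies in the strictly stable subspace of $M$ (the sum of generalised eigenspaces of eigenvalues with modulus $<1$); an analogous characterisation holds for $u^*$ via the adjoint monodromy $M^{\mathrm{ad}}$.

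The core of the argument, and the main obstacle, is to show that $u,u^*\in\ell^2$ forces $u_k=0$, i.e.\ that $e_1$ is an eigenvector of $M$ (since $(u_{k+1},u_k)^T=Me_1$). I would proceed by cases on the eigenvalues $\mu_1,\mu_2$ of $M$. If $|\mu_1|<1\le|\mu_2|$, the strictly stable subspace of $M$ is one-dimensional, so $e_1$ in it is forced to be the stable eigenvector and $u_k=M_{21}=0$; if $|\mu_1|\ge 1$, no $\ell^2$ solution exists, contradicting $u\in\ell^2$. The delicate case is $|\mu_1|,|\mu_2|<1$, where $u\in\ell^2$ is automatic (the strictly stable subspace of $M$ is all of $\mathbb C^2$) and the constraint must come from $u^*$: since $|\det M|<1$ and a direct computation gives $\det M^{\mathrm{ad}}=\overline{(\det M)^{-1}}$, one has $|\det M^{\mathrm{ad}}|>1$, so $M^{\mathrm{ad}}$ has at most a one-dimensional strictly stable subspace and $u^*\in\ell^2$ pins $e_1$ down as an eigenvector of $M^{\mathrm{ad}}$, giving $u^*_k=0$. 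To pass from $u^*_k=0$ back to $u_k=0$ I would invoke the identity $u^*_k=0\iff u_k=0$, which holds because the determinant of any tridiagonal matrix depends only on its diagonal entries and on the products $b_jc_j$ of paired sub- and super-diagonal entries, hence is invariant (up to complex conjugation) under the adjoint $A\mapsto A^*$.

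To translate back to (iii), the standard three-term determinant recurrence for tridiagonal matrices gives $u_k=(-1)^{k-1}\det(M^{(k-1)})/\prod_{j<k}b_j$, where $M^{(k-1)}$ denotes the leading $(k-1)\times(k-1)$ principal submatrix of $A_0$, so $u_k=0$ is exactly the first assertion of (iii). For the ``in particular'' clause I would take a nonzero $v=(v_1,\dots,v_{k-1})\in\ker M^{(k-1)}$, extend by $v_k=0$ to a vector in $\ker A_1$, and solve the single remaining scalar equation $c_{k-1}v_{k-1}+z_0^{-1}b_k v_1=0$ for $z_0\in\mathbb C\cup\{\infty\}$; the needed $v_1\ne 0$ follows because $v_1=0$ would propagate through the leading tridiagonal recurrence and force $v\equiv 0$.
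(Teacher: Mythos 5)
Your route is genuinely different from the paper's. The paper works entirely in the Hardy space $(H^2)^k$: assuming both (i) and (ii) fail, it produces $\vect g_+,\vect h_+\in (H^2)^k$ with $f\vect g_+\in (H^2_-)^k$ and $\bar f^\top\vect h_+\in (H^2_-)^k$, forms the scalar function $\varphi=\bar{\vect h}_+^\top f\vect g_+$, shows $\varphi=0$ by comparing Fourier supports, and then uses the F.~and~M.~Riesz theorem to force $\vect v_n\perp\vect e_k$, from which the singularity of $A_0|_{(k-1)\times(k-1)}$ drops out. You instead unfold $T(f)$ into the scalar three-term recurrence, reduce membership of the fundamental solution in $\ell^2$ to a statement about the strictly stable subspace of the period-$k$ monodromy $M$, and play the forward and adjoint monodromies against each other via $\det M^{\mathrm{ad}}=\overline{(\det M)^{-1}}$. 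Where it applies, your argument is correct and arguably more transparent: the case analysis on $|\mu_1|,|\mu_2|$ is exhaustive, the identification $u_k=(-1)^{k-1}\det\bigl(A_0|_{(k-1)\times(k-1)}\bigr)/\prod_{j<k}b_j$ is right, and the invariance of tridiagonal determinants under $b_jc_j\mapsto \overline{b_jc_j}$ correctly transfers $u_k^*=0$ to $u_k=0$. (In your third case one can even say more: by Floquet duality the multipliers of the adjoint recurrence are $1/\bar\mu_i$, so the stable subspace of $M^{\mathrm{ad}}$ is actually trivial and that case contradicts the failure of (ii) outright; your weaker ``at most one-dimensional'' suffices, though.)

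The genuine gap is that your entire mechanism presupposes $b_j\neq 0$ for all $j$ (to define the forward transfer matrices $T_j$ and to conclude that the kernel of $A$ is one-dimensional and spanned by the fundamental solution) and $c_j\neq 0$ for all $j$ (to define the adjoint transfer matrices and to normalise $\psi_1=1$). Neither is implied by the theorem's only hypothesis, that $\det f$ does not vanish identically on $\mathbb{T}$: by the determinant expansion $\det f(z)=(-1)^{k+1}(\prod c_i)z+(-1)^{k+1}(\prod b_i)z^{-1}+g(0)$, one may have, say, $b_1=0$ with $g(0)\neq 0$, and then row $1$ of $A\phi=0$ becomes a pure constraint $a_1\phi_1=0$ rather than a propagation rule, the kernel need not be one-dimensional, and $M$ is undefined. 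The same degeneracy hits the adjoint side in your third case when some $c_j=0$, and it also infects the ``in particular'' clause, where you need $b_k\neq 0$ to solve $c_{k-1}v_{k-1}+z_0^{-1}b_kv_1=0$ for $z_0$ and need $b_1,\dots,b_{k-2}\neq 0$ for your propagation argument that $v_1\neq 0$. The paper's proof avoids all of this: it only ever needs $b_k\neq 0$ and $\vect v_1^{(1)}\neq 0$, and it derives both from $\det f\not\equiv 0$ via F.~and~M.~Riesz (if either vanished, $f\vect g_+$ would be identically zero with $\vect g_+\neq 0$ a.e.). To close your proof you would either have to restrict the statement to $\prod b_j\neq 0$, $\prod c_j\neq 0$ — which is enough for Theorem \ref{thm: exponential_decay_k_operators} but not for the use of Coburn's lemma inside Theorem \ref{thm: windingspectrum} — or supply a separate argument for the degenerate configurations.
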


\begin{proof}
    It is not hard to see that the adjoint of $T(f)$ is $T(\Bar{f}^\top)$, where the superscript $\top$ denotes the transpose. Assume that $T(f)$ has a nontrivial kernel and a nondense range, i.e., conditions (i) and (ii) do not hold. This implies that $T(\Bar{f}^{\top})$ has a nontrivial kernel as well. Hence there exist nonzero functions
    $\vect g_+, \vect h_+ \in (H^2)^k$ such that
    \begin{equation}\label{equ:proofcoburnlem1}
        f\vect g_+ = \vect g_- \in (H^2_-)^k \quad \text{and}\quad \Bar{f}^\top \vect h_+ = \vect h_- \in (H^2_-)^k,
    \end{equation} 
    where $\vect g_{\pm} = (g_{\pm,1}, g_{\pm,2}, \cdots, g_{\pm,k})^\top$ and $\vect h_{\pm} = (h_{\pm,1}, h_{\pm,2},\cdots, h_{\pm, k})^\top$. 
    Since $\vect g_+, \vect h_+ \in (H^2)^k$, there exist two sequences of vectors $(\vect v_n)_n, (\vect w_n)_n \in \C^k$ such that
    \begin{equation}\label{equ:proofcoburnlem0}
    \begin{aligned}
        \vect g_+ &= \sum_{n\in \N} \vect v_n  z^{n-1},\\
        \vect h_+ &= \sum_{n\in \N} \vect w_n z^{n-1},
    \end{aligned} 
    \end{equation}
    for $z=e^{i\theta}$, where we have absorbed the factor $\frac{1}{\sqrt{2\pi}}$ into $\vect v_n$ and $\vect w_n$. 
    This allows us to write
    \begin{equation}\label{equ:proofcoburnlem2}
    \begin{aligned}
        \vect g_- &= z^{-1}A_{-1}\vect v_1 + (A_0\vect v_1 + A_{-1}\vect v_2)\\
        &\phantom{=}
        + \sum_{n\in\N}z^n(A_{-1}\vect v_{n+2} + A_0 \vect v_{n+1} + A_1\vect v_n)\\
        &=z^{-1}A_{-1}\vect v_1,
    \end{aligned}
    \end{equation}
    where we have used that $\vect g_-\in (H^2_-)^k$ in the last equality. Similarly, using that $\vect h_- \in (H^2_-)^k$, we obtain the following expression for $\vect h_-$:
    \begin{equation}\label{equ:proofcoburnlem3}
    \begin{aligned}
        \vect h_- =& z^{-1}\Bar{A_1}^\top \vect w_1 + (\Bar{A_0}^\top \vect w_1 + \Bar{A_1}^\top \vect w_2)\\
        &+ \sum_{n\in\N}z^n(\Bar{A_1}^\top \vect w_{n+2} + \Bar{A_0}^\top \vect w_{n+1} + \Bar{A_{-1}}^\top\vect w_n)\\
        =&z^{-1}\Bar{A_1}^\top \vect w_1.
    \end{aligned}
    \end{equation}
    By Theorem \ref{thm: F. and M. Riesz}, it follows that $\vect g_+, \vect h_+ \neq 0$ almost everywhere on $\mathbb T$. We define
    \begin{equation*}
        \varphi :=  \Bar{\vect h}_+^\top \vect g_- =  \Bar{\vect h}_+^\top f\vect g_{+}= \Bar{(\Bar{f}^{\top}{\vect h}_+)}^{\top} \vect g_{+}=\Bar{\vect h}_-^\top \vect g_+.
    \end{equation*}
    It holds that $\varphi \in L^1$. Moreover, $\varphi_n = (\Bar{\vect h}_-^\top \vect  g_+)_n = 0$ for $n\leq 0$ and $\varphi_n = (\Bar{\vect h}_+^\top \vect g_-)_n = 0$ for $n\geq 0$ since by (\ref{equ:proofcoburnlem0}), (\ref{equ:proofcoburnlem2}), and (\ref{equ:proofcoburnlem3}),
    \begin{align*}
        \Bar{\vect h}_+^\top \vect g_- &= \sum_{n \in \N} z^{-n} \Bar{\vect w}_n^\top A_{-1}\vect v_1,\\
        \Bar{\vect h}_-^\top \vect  g_+ &= \sum_{n \in \N} z^{n} (\Bar{\Bar{A_{1}}^\top \vect w}_1)^\top \vect v_n.
    \end{align*}
    Hence $\varphi = 0$. This implies $\Bar{A_{1}}^\top \vect w_1 \perp \vect v_n$ and $A_{-1} \vect v_1 \perp \vect w_n$ for $n \in \N$. Thus either $\vect v_n \perp \vect e_k, n=1,\cdots$, or $\vect w_1 \in \ker \bar{A_{1}}^\top$. However, if $\vect w_1 \in \ker \bar{A_{1}}^\top$, then by (\ref{equ:proofcoburnlem3})
    \begin{align*}
        \Bar{f}^\top\vect h_+ = \vect h_-
        =z^{-1}\Bar{A_1}^\top \vect w_1
        =\vect 0,
    \end{align*}  
    which contradicts the fact that $\det(\Bar{f}^\top) = \Bar{\det(f)}$ does not vanish identically on $\mathbb{T}$. Therefore, $\vect v_n \perp \vect e_k$ for $n\in\N_{\geq1}$. By the above arguments, we conclude that if $T(f)$ has non-dense range and nontrivial kernel, then $\vect v_n \perp \vect e_k$ for $n\in\N_{\geq1}$ and for any $\vect v_n$ defined by (\ref{equ:proofcoburnlem0}). 
    
    Now we prove that if $T(f)$ has a non-dense range and a nontrivial kernel, then the condition (iii) must hold. Since $f\vect g_+ = \vect g_- \in (H^2_-)^k$, by expansion (\ref{equ:proofcoburnlem2}) it holds that $A_{-1}\vect v_{n+2} + A_0 \vect v_{n+1} + A_1\vect v_n =\vect 0$ for $n\in\N$. Since $\vect v_n \perp \vect e_k$ for $n\in\N_{\geq1}$ it follows that $\vect v_n \in \ker(A_1)$. Hence it holds that $A_{-1}\vect v_{n+1} + A_0\vect v_{n} = \vect 0$ for $n\in\N_{\geq2}$. Note that the expansion (\ref{equ:proofcoburnlem2}) also gives $A_{-1}\vect v_2+A_0\vect v_1=0$. Therefore, $A_{-1}\vect v_{n+1} + A_0\vect v_{n} = \vect 0$ for $n\in\N_{\geq1}$. Furthermore, since the image of $A_{-1}$ is confined to the span of $\vect e_k$ and $\vect v_n \perp \vect e_k$ for $n\in\N_{\geq1}$, we have 
    \[
   A_{0}|_{(k-1)\times (k-1)} \vect v_{n}|_{1:k-1} =\vect 0,
    \]
    where $A_{0}|_{(k-1)\times (k-1)}$ is
    the leading principal $(k-1)\times (k-1)$ submatrix of $A_0$. This also means that $A_{0}|_{(k-1)\times (k-1)}$ cannot have full rank. On the other hand, one can compute that
    \begin{equation}\label{eq:proofcoburnlem4}
        (z^{-1}A_{-1}+A_0)\vect v_n =
        \begin{pmatrix}
            0\\
            \vdots\\
            0\\
            c_{k-1}\vect v_n^{(k-1)} + z^{-1}b_k\vect v_n^{(1)}
        \end{pmatrix},\quad n\geq 1
    \end{equation}
    where $\vect v_n^{(j)}$ is the $j$-th element of $\vect v_n$. Considering the case when $n=1$, by (\ref{equ:proofcoburnlem2}), we know that $\vect v_1^{(1)}\neq 0$, otherwise, 
    \[
    f \vect g_+ = \vect g_{-} = z^{-1}A_{-1}\vect v_1 =\vect 0,
     \]
    which contradicts $\det f$ does not vanish identically on $\mathbb{T}$. Similarly, we can also show that $b_k\neq 0$. Thus by (\ref{eq:proofcoburnlem4}) and $b_k\neq 0$ there exists some $z_0 \in \C\cup\{\infty\}$ such that $\vect v_1$ lies in the kernel of $z_0^{-1}A_{-1}+A_0$. Therefore it holds that 
    $\ker(z_0^{-1}A_{-1} + A_0)\cap\ker(A_1) \neq \vect 0$, which concludes the proof. 
\end{proof}

\subsection{Spectra of tridiagonal $k$-Toeplitz operators}
In this section, we will provide a full characterization of the spectrum of tridiagonal $k$-Toeplitz operators. We begin by first recalling (and generalizing) a few theorems on Toeplitz operators. The first important theorem is a consequence of \cite[Chapter 23, Theorem 4.3]{gohberg1993basic}. Its counterpart for Toeplitz operators was already established in \cite[Chapter 23, Corollary 4.4]{gohberg1993basic}.

We denote by $\sigma(\cdot), \sigma_{ess}(\cdot)$  the spectrum and the essential spectrum of the operator, respectively. We define 
\begin{equation}\label{equ:detspectraformula1}
\sigma_{\mathrm{det}}(f) = \{\lambda\in \mathbb C: \det(f(z)-\lambda)=0,\ \exists  z\in \mathbb T \}.
\end{equation}
We first have the following theorem characterizing the essential spectrum of $T(f)$.

\begin{theorem}\label{thm: essential spectrum}
The operator $T(f)$ in (\ref{equ:tridiagonalktoeplitzop2}) is Fredholm if and only if
\(\mathrm{det}(f)\) has no zeros on \(\mathbb{T}\). Furthermore, the essential spectrum of $T(f)$ is given by
    \begin{equation*}
     \sigma_{ess}(T(f)) = \sigma_{\mathrm{det}}(f).
    \end{equation*}
\end{theorem}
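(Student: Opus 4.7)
The plan is to reduce the statement to the classical Gohberg--Krein Fredholm criterion for block Toeplitz operators with continuous matrix-valued symbol, cited as \cite[Chapter 23, Theorem 4.3]{gohberg1993basic}.

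First I would observe that the reformulation \eqref{equ:tridiagonalktoeplitzop2} exhibits $T(f)$ as a block Toeplitz operator acting on $(\ell^2(\mathbb{N}))^k$, with matrix-valued symbol $f:\mathbb{T}\to\mathbb{C}^{k\times k}$ given explicitly by \eqref{eq: symbol tridiagonal operator}. Since $f$ is a Laurent polynomial of degree $\{-1,0,1\}$ in $z$, it is continuous (in fact real-analytic) on $\mathbb{T}$, so the Gohberg--Krein theorem applies and yields: $T(f)$ is Fredholm if and only if $\det f(z)\neq 0$ for every $z\in\mathbb{T}$. This is exactly the first assertion.

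Next, to identify the essential spectrum, I would use the standard characterisation
\begin{equation*}
\sigma_{ess}(T(f)) = \{\lambda\in\mathbb{C}:T(f)-\lambda I \text{ is not Fredholm}\}.
\end{equation*}
Under the block identification of $\ell^2(\mathbb{N})$ with $(\ell^2(\mathbb{N}))^k$, the identity $I$ on the sequence space corresponds to the block Toeplitz operator with the constant symbol $I_k$. Consequently
\begin{equation*}
T(f)-\lambda I \;=\; T(f-\lambda I_k),
\end{equation*}
and the symbol $f-\lambda I_k$ is again continuous on $\mathbb{T}$. Applying the first assertion to $f-\lambda I_k$ gives that $T(f)-\lambda I$ fails to be Fredholm precisely when $\det(f(z)-\lambda I_k)=0$ for some $z\in\mathbb{T}$, which is the defining condition for $\lambda\in\sigma_{\mathrm{det}}(f)$ in \eqref{equ:detspectraformula1}. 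This yields $\sigma_{ess}(T(f))=\sigma_{\mathrm{det}}(f)$.

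The only real content beyond bookkeeping is the block Toeplitz Fredholm criterion itself, which is invoked from the literature; the main potential obstacle is therefore merely verifying that $f$ is continuous (so that the Gohberg--Krein hypothesis applies) and that the identity shift translates correctly into the block picture, both of which are immediate from the explicit form of $f$ in \eqref{eq: symbol tridiagonal operator} and the structure of the block identification.
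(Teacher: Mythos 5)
Your proposal is correct and follows essentially the same route as the paper: both invoke the Gohberg--Krein Fredholm criterion from \cite[Chapter 23, Theorem 4.3]{gohberg1993basic} for the first assertion and then obtain the essential spectrum by applying that criterion to the shifted symbol $f-\lambda I_k$. Your write-up just makes explicit a couple of bookkeeping points (continuity of $f$ and the identification $T(f)-\lambda I = T(f-\lambda I_k)$) that the paper leaves implicit.
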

\begin{proof}
The first part has been proven in \cite[Chapter 23, Theorem 4.3]{gohberg1993basic}. For the second part, note that $\lambda \in \sigma_\text{ess}(T(f)) $ if and only if $T(f)- \lambda$ is not a Fredholm operator. By the first part, this happens only if $\det(f(z_0)-\lambda)=0$ for some $z_0 \in \mathbb{T}$. That is, $\lambda$ is an eigenvalue of $f(z_0)$. On the other hand, if $\lambda$ is an eigenvalue for some $z_0 \in \mathbb{T}$, it holds that $\det(f(z_0) -\lambda) = 0$, which concludes the proof.
\end{proof}

We now recall the following Theorem due to Gohberg \cite[Chapter 23, Theorem 
5.1]{gohberg1993basic}. 
\begin{theorem}\label{thm: windingnumber}
    Let $f:\mathbb{T}\to \C^{k\times k}$ be such that \(T(f)\) is a Toeplitz operator. Then,  \(T(f)\) is Fredholm on the space \(\ell^2\) if and only if \(\mathrm{det}(f)\) has no zeros on \(\mathbb{T}\), in which case
    \begin{equation*}
        \operatorname{Ind}T(f) = -\operatorname{wind}(\det(f(\mathbb T)),0),
    \end{equation*}
    with $\operatorname{wind}(\det(f(\mathbb T)),0)$ being the winding number about the origin of the determinant of $f(z)$ with $z\in \mathbb T$.  
\end{theorem}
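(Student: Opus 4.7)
The first claim --- Fredholmness of $T(f)$ on $\ell^2$ iff $\det f$ has no zeros on $\mathbb{T}$ --- has already been established in \cref{thm: essential spectrum}, so the task reduces to proving the index formula under the assumption $\det f(z) \neq 0$ for every $z \in \mathbb{T}$. The plan is to combine the homotopy invariance of the Fredholm index with a topological classification of nonvanishing matrix symbols, and then to compute the index on a convenient family of representatives.

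First, I would show that the assignment $f \mapsto T(f)$ is norm-continuous from $C(\mathbb{T}, \C^{k \times k})$ into the bounded operators on $(\ell^2)^k$. Consequently, if $(f_t)_{t \in [0,1]}$ is a continuous family of symbols with $\det f_t(z) \neq 0$ on $\mathbb{T}$ for every $t$, then $t \mapsto T(f_t)$ is a continuous path in the Fredholm operators, and since the Fredholm index is locally constant on that open set, $\operatorname{Ind} T(f_t)$ is independent of $t$.

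Second, I would invoke the topological classification: the path components of $C(\mathbb{T}, \mathrm{GL}_k(\C))$ are in bijection with $\Z$ through $f \mapsto \operatorname{wind}(\det f(\mathbb{T}), 0)$, a consequence of the deformation retraction of $\mathrm{GL}_k(\C)$ onto $U(k)$ together with the fact that $\det \colon U(k) \to U(1)$ induces the isomorphism $\pi_1(U(k)) \cong \Z$. Thus every nonvanishing symbol $f$ with winding number $n$ is path-connected, through nonvanishing symbols, to the representative $g_n(z) := \operatorname{diag}(z^n, 1, \ldots, 1)$. The block Toeplitz operator $T(g_n)$ decouples on $(H^2)^k$ as $T(z^n) \oplus I \oplus \cdots \oplus I$; a direct inspection of the scalar Toeplitz $T(z^n)$ --- the $|n|$-fold shift or co-shift on $H^2$ --- gives $\operatorname{Ind} T(z^n) = -n$. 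Homotopy invariance then yields $\operatorname{Ind} T(f) = -n = -\operatorname{wind}(\det f(\mathbb{T}), 0)$.

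The hard part will be the topological classification step, specifically constructing the homotopy so that the determinant remains nonvanishing throughout --- the retraction onto $U(k)$ must be arranged carefully so as not to pass through singular symbols, and the continuity of $\operatorname{wind}(\det f, 0)$ in $f$ (within $C(\mathbb{T}, \mathrm{GL}_k(\C))$) must be used to detect the component. A cleaner but more technical alternative is the Wiener--Hopf factorization $f = f_- D f_+$, with $D(z) = \operatorname{diag}(z^{\kappa_1}, \ldots, z^{\kappa_k})$ the partial indices (satisfying $\sum_j \kappa_j = \operatorname{wind}(\det f, 0)$) and $f_\pm$ extending invertibly inside/outside $\mathbb{T}$; under these analyticity assumptions $T(f_- D f_+) = T(f_-) T(D) T(f_+)$ holds with invertible outer factors, reducing the index computation directly to the diagonal case $D$.
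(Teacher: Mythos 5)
The paper does not prove this statement at all: it is recalled verbatim as a known result and attributed to \cite[Chapter 23, Theorem 5.1]{gohberg1993basic}, so there is no internal proof to compare against. Your argument is a correct, self-contained proof of the cited theorem, and it is essentially the standard one found in the block-Toeplitz literature: reduce Fredholmness to invertibility of the symbol (which the paper's Theorem \ref{thm: essential spectrum} already records), then use norm-continuity of $f\mapsto T(f)$ (from $\lVert T(f)-T(g)\rVert\leq \lVert f-g\rVert_{L^\infty}$) together with local constancy of the index to reduce to one representative per homotopy class, classify the classes by $\operatorname{wind}(\det f,0)$, and compute the index on $\operatorname{diag}(z^n,1,\dots,1)$. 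Two small points deserve explicit mention if you write this up. First, the classification step is about the \emph{free} loop space, so $\pi_0\bigl(C(\mathbb{T},\mathrm{GL}_k(\C))\bigr)$ is the set of conjugacy classes of $\pi_1(\mathrm{GL}_k(\C))$; this collapses to $\Z$ precisely because $\pi_1(\mathrm{GL}_k(\C))\cong\pi_1(U(k))\cong\Z$ is abelian and $\det$ realizes the isomorphism, and the polar-decomposition retraction onto $U(k)$, applied pointwise in $z$, automatically never passes through singular matrices, so the "hard part" you flag is in fact harmless. Second, the homotopy argument requires $f$ to be continuous; in the paper's setting $f$ is a matrix Laurent polynomial, so this is automatic, but the theorem as stated for a bare $f:\mathbb{T}\to\C^{k\times k}$ should carry that hypothesis. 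Your Wiener--Hopf alternative is also valid, with the caveat that the existence of the factorization for continuous matrix symbols is itself a theorem of comparable depth, whereas for the Laurent-polynomial symbols relevant here it can be obtained more concretely.
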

Utilizing this theorem together with Theorem \ref{thm: coburns lemma tridiagonal}, we are now able to establish one of the main results of our paper, an optimal characterization of the spectra of tridiagonal $k$-Toeplitz operators. Before stating the theorem, we first define 
\begin{equation}\label{equ:windspectraformula1}
\sigma_{\mathrm{wind}}(f):= \{\lambda \in \mathbb{C}\setminus \sigma_{\mathrm{det}}(f) :  \operatorname{wind}(\det(f(\mathbb T) - \lambda),0) \neq 0 \}.
\end{equation}
By Lemma \ref{lem:tridiagonaldeterminant}, 
\[
\det(f(z) - \lambda) = (-1)^{k+1}\left(\prod_{i = 1}^k c_i\right) z + (-1)^{k+1}\left(\prod_{i = 1}^k b_i\right) z^{-1} + g(\lambda)
\]
with $g(\lambda)$ being given by (\ref{equ:defiglambda1}), which yields
\begin{equation}\label{equ:windspectraformula2}
\sigma_{\mathrm{wind}}(f)= \left\{\lambda \in \mathbb{C}\setminus \sigma_{\mathrm{det}}(f):  \operatorname{wind}\left((-1)^{k+1} \left((\prod_{i = 1}^k c_i) z + (\prod_{i = 1}^k b_i) z^{-1}\right), -g(\lambda)\right) \neq 0 \right\}.
\end{equation}
Furthermore, since 
\[
\mathrm{det}(f(z)-\lambda) = \prod_{j = 1}^k (\lambda_j(z)-\lambda), 
\]
where $\lambda_j(z), 1\leq j\leq k,$ are the eigenvalues of the matrix $f(z)$, we obtain
\begin{align}
\sigma_{\mathrm{wind}}(f)=& \left\{\lambda \in \mathbb{C}\setminus \sigma_{\mathrm{det}}(f) :  \sum_{j=1}^k\operatorname{wind}(\lambda_j(\mathbb T) - \lambda,0) \neq 0 \right\} \nonumber \nonumber  \\
=& \left\{\lambda \in \mathbb{C}\setminus \sigma_{\mathrm{det}}(f) :  \sum_{j=1}^k\operatorname{wind}(\lambda_j(\mathbb T), \lambda) \neq 0 \right\}. \label{equ:windspectraformula3}
\end{align}

Note that the representations (\ref{equ:windspectraformula1}), (\ref{equ:windspectraformula2}) and (\ref{equ:windspectraformula3}) have their own pro and contra. For relating our results to the conjecture in \cite{dimerSkin}, we utilize representation (\ref{equ:windspectraformula3}) in our main results. We now establish the following theorem for the spectrum of the operator $T(f)$. 
\begin{theorem}\label{thm: windingspectrum}
    Let $f \in \C^{k\times k}(\mathbb{T})$ be the symbol of a tridiagonal $k$-Toeplitz operator $T(f)$. Denote by $B_0$ the leading $(k-1)\times(k-1)$ principal minor of $A_0$. It holds that
  \[
  \sigma_{\mathrm{det}}(f)\cup \sigma_{\mathrm{wind}}(f)  \subset \sigma(T(f)) \subset \sigma_{\mathrm{det}}(f)\cup \sigma_{\mathrm{wind}}(f) \cup \sigma(B_0),
  \]
  where $\sigma_{\mathrm{det}}(f), \sigma_{\mathrm{wind}}(f)$ are given by (\ref{equ:detspectraformula1}), (\ref{equ:windspectraformula3}), respectively. 
\end{theorem}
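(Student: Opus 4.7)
The plan is to split the claim into the two inclusions and, for each, apply in combination the three tools already established in this section: the essential-spectrum formula (Theorem~\ref{thm: essential spectrum}), Gohberg's index formula (Theorem~\ref{thm: windingnumber}) and the generalised Coburn lemma (Theorem~\ref{thm: coburns lemma tridiagonal}), all for the shifted symbol $f-\lambda I_k$. Note that whenever $\lambda\notin\sigma_{\mathrm{det}}(f)$ the determinant $\det(f(z)-\lambda)$ is non-zero for every $z\in\mathbb T$, so in particular it does not vanish identically on $\mathbb T$ and each of these three theorems is legitimately applicable to $T(f)-\lambda=T(f-\lambda I_k)$.

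For the lower inclusion, the containment $\sigma_{\mathrm{det}}(f)\subset\sigma(T(f))$ is immediate since $\sigma_{ess}(T(f))\subset\sigma(T(f))$ and $\sigma_{ess}(T(f))=\sigma_{\mathrm{det}}(f)$ by Theorem~\ref{thm: essential spectrum}. For $\lambda\in\sigma_{\mathrm{wind}}(f)$, the definition forces $\lambda\notin\sigma_{\mathrm{det}}(f)$, so $T(f)-\lambda$ is Fredholm; Theorem~\ref{thm: windingnumber} combined with the eigenvalue representation~\eqref{equ:windspectraformula3} then gives
\[
\operatorname{Ind}(T(f)-\lambda)=-\sum_{j=1}^k\operatorname{wind}(\lambda_j(\mathbb T),\lambda)\neq 0,
\]
and a non-zero Fredholm index precludes invertibility, so $\lambda\in\sigma(T(f))$.

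The upper inclusion I treat contrapositively: assuming $\lambda\notin\sigma_{\mathrm{det}}(f)\cup\sigma_{\mathrm{wind}}(f)\cup\sigma(B_0)$, I want to conclude that $T(f)-\lambda$ is invertible. The first two exclusions imply, via Theorems~\ref{thm: essential spectrum} and \ref{thm: windingnumber}, that $T(f)-\lambda$ is Fredholm with $\operatorname{Ind}(T(f)-\lambda)=0$, so $\dim\ker(T(f)-\lambda)=\dim\operatorname{Coker}(T(f)-\lambda)$; since Fredholm operators have closed range, invertibility reduces to showing the kernel is trivial. Suppose for contradiction that the kernel is non-trivial; then the cokernel is also non-trivial, so the range is not dense, and alternatives (i) and (ii) of Theorem~\ref{thm: coburns lemma tridiagonal} applied to $f-\lambda I_k$ both fail. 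Hence alternative (iii) must hold, which says precisely that the leading $(k-1)\times(k-1)$ principal minor of $A_0-\lambda I_k$, namely $B_0-\lambda I_{k-1}$, has a non-trivial kernel. This is equivalent to $\lambda\in\sigma(B_0)$, contradicting our standing assumption, and finishes the argument.

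The main subtlety is really bookkeeping: one must check at each step that the hypotheses of the three invoked theorems genuinely apply to the shifted symbol, which is automatic from the single condition $\lambda\notin\sigma_{\mathrm{det}}(f)$. Beyond this, the proof is a clean dichotomy driven by the Fredholm index on the one hand and the Coburn-type trichotomy on the other, and no estimate or spectral-theoretic ingredient beyond the three cited theorems is required.
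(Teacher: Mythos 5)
Your proposal is correct and follows essentially the same route as the paper's proof: the lower inclusion from Theorems~\ref{thm: essential spectrum} and~\ref{thm: windingnumber}, and the upper inclusion by combining the zero-index Fredholm alternative with the trichotomy of Theorem~\ref{thm: coburns lemma tridiagonal} applied to the shifted symbol $f-\lambda I_k$. Your contrapositive phrasing and the explicit identification of the shifted minor as $B_0-\lambda I_{k-1}$ are merely a cleaner bookkeeping of the same argument.
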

\begin{proof}
    The first inclusion is an immediate consequence of Theorems \ref{thm: essential spectrum} and \ref{thm: windingnumber}. For the second inclusion, note that if
    \begin{equation}\label{equ:proofwindingspectrum1}
        \lambda \in \sigma(T(f))\setminus \left(\sigma_{\mathrm{det}}(f)\cup \sigma_{\mathrm{wind}}(f) \right),
    \end{equation}
     then it must hold that $T(f)-\lambda I$ is Fredholm and $\operatorname{Ind}(T(f)-\lambda I) = \operatorname{wind}(\det(f-\lambda),0)=0$. Based on the definition (\ref{defi:fredholmindex1}), this implies
     \begin{equation}\label{equ:proofwindingspectrum2}
     \mathrm{dim} \ \mathrm{Ker} (T(f)-\lambda I) = \mathrm{dim} (\ell^2\setminus \mathrm{Im} (T(f)-\lambda I)).
     \end{equation}
     Moreover, since $\lambda \in \sigma(T(f))$, $T(f)-\lambda I$ has a non-trivial kernel and non-dense image. By Theorem \ref{thm: coburns lemma tridiagonal}, this implies that $\lambda$ must be an eigenvalue of $B_0$.
\end{proof}

This characterization is optimal. It cannot be made more precise, as it cannot be guaranteed that an eigenvalue $\lambda$ of $B_0$ with $\operatorname{wind}(\det(f-\lambda),0)=0$ is also an eigenvalue of $T(f)$. To illustrate this, consider
\begin{equation*}
    f(z) = 
    \begin{pmatrix}
        0 & 1 + \frac{1}{2}z \\
        1 + \frac{1}{2}z^{-1}&1\\
    \end{pmatrix}.
\end{equation*}
The determinant of this symbol is given by $\det(f)=-(1+\frac{1}{2}z)(1+\frac{1}{2}z^{-1})$. We have $\operatorname{wind}(\det(f(\mathbb T)),0)=0$ and $\det f(z)\neq 0, \forall z\in \mathbb T$. In this case, $B_0 = 0$ is just the top left entry of $f(z)$. Furthermore, by (\ref{defi:fredholmindex1}), if $0\in \sigma(T(f))$, then the kernel space of $T(f)$ must be non-trivial. Since 
\begin{equation*}
    T(f) = 
    \begin{pmatrix}
        0 & 1 & 0 & 0& \\
        1 & 1 & \frac{1}{2}& 0 & \\
        0 & \frac{1}{2} & 0 & 1 & \\
        0 & 0 & 1 & 1 & \\
        &&&&\ddots
    \end{pmatrix},
\end{equation*}
one can check that, after scaling, the only possible eigenvector of $T(f)$ corresponding to the eigenvalue $0$ is $\vect u = (1,0,-2,0,4,\dots)^\top$. However, $\vect u$ does not lie in $\ell^2$, which means that $0\not \in \sigma(T(f))$. 


On the other hand, consider
\begin{equation*}
    \Tilde{f}(z) = 
    \begin{pmatrix}
        0 & 1 + 2z \\
        1 + 2z^{-1}&1\\
    \end{pmatrix}.
\end{equation*}
The determinant is $\det(\Tilde{f})=-(1+2z)(1+2z^{-1})$, hence $0 \notin \sigma_{\mathrm{det}}(f)$ and $\operatorname{wind}(\det(\Tilde{f}(\mathbb T)),0)=0$. Therefore, by Theorem \ref{thm: coburns lemma tridiagonal}, the operator $T(\Tilde{f})$ is either invertible or has a nontrivial kernel. By explicitly writing down $T(\Tilde{f})$
\begin{equation*}
    T(\Tilde{f}) = 
    \begin{pmatrix}
        0 & 1 & 0 & 0& \\
        1 & 1 & 2& 0 & \\
        0 & 2 & 0 & 1 & \\
        0 & 0 & 1 & 1 & \\
        &&&&\ddots
    \end{pmatrix},
\end{equation*}
we find that $\Tilde{\vect u} = (1,0,-\frac{1}{2},0,\frac{1}{4},\dots)^\top$ satisfies $T(\Tilde{f})\Tilde{\vect u} = \vect 0$. Then in this case, $\Tilde{\vect u} \in \ell^2$ and $0\in \sigma(T(f))$.

\begin{remark}
Note that (\ref{equ:windspectraformula2}) and (\ref{equ:windspectraformula3}) also provide ways to compute $\sigma_{\mathrm{wind}}(f)$ explicitly. We will utilize (\ref{equ:windspectraformula3}) in Section \ref{Sec: Topological_origin} for numerical illustrations of the topological origin of the skin effect. 
\end{remark}

For the sake of completeness, we will devote the last part in this section to tridiagonal $k$-Laurent operators that are defined as
\begin{equation}\label{equ:tridiagonalklaurentop2}
    L(f) = \begin{pmatrix}
    \ddots & \ddots & \ddots &  \\
       & A_1& A_0 & A_{-1} &  \\
       & & A_{1} & A_0 & A_{-1}&  \\
        & & & \ddots & \ddots &\ddots  \\
    \end{pmatrix},
\end{equation}
with the generating symbol $f$ being (\ref{eq: symbol tridiagonal operator}). Using the identification $L(f-\lambda) = L(f)-\lambda$, which we have used to characterize the spectrum of tridiagonal $k$-Toeplitz operators, one can obtain an explicit 
characterization of the spectrum of tridiagonal $k$-Laurent operators as a consequence of \cite[Chapter 23, Corollary 2.5]{gohberg1993basic}. The following theorem holds. 

\begin{theorem}
    Let $f \in (\mathcal{L}(\mathbb{T})^\infty)^{k\times k}$ and denote the associated tridiagonal $k$-Laurent operator $L(f)$. Then, it holds that
    \begin{equation*}
        \sigma(L(f)) =\sigma_{\text{ess}}(L(f)) =  \sigma_{\det}(f)
    \end{equation*}
    with $\sigma_{\det}(f)$ being defined by (\ref{equ:detspectraformula1}).
\end{theorem}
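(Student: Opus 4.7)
The plan is to reduce everything to the fact that any block Laurent operator is unitarily equivalent via the discrete Fourier transform $\mathcal{F}:\ell^2(\Z)^k\to L^2(\mathbb{T})^k$ to the matrix-valued multiplication operator $M_f:u\mapsto fu$ on $L^2(\mathbb{T})^k$. This is a direct computation using the fact that the symbol $f$ in \eqref{eq: symbol tridiagonal operator} is a Laurent polynomial in $z$ with matrix coefficients. Since $L(f-\lambda)=L(f)-\lambda I$ under this identification corresponds to $M_{f-\lambda}=M_f-\lambda I$, it suffices to compute $\sigma(M_f)$ and $\sigma_{\mathrm{ess}}(M_f)$.

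For the full spectrum, observe that $M_{f-\lambda}$ is invertible on $L^2(\mathbb{T})^k$ if and only if the inverse $(f(z)-\lambda)^{-1}$ exists as an essentially bounded matrix-valued function. Since $f$ is continuous on the compact set $\mathbb{T}$, this is equivalent to $\det(f(z)-\lambda)\neq 0$ for every $z\in\mathbb{T}$: if this holds, compactness yields a positive lower bound $\inf_{z\in\mathbb{T}}|\det(f(z)-\lambda)|>0$, giving a bounded multiplicative inverse $M_{(f-\lambda)^{-1}}$; conversely, any zero of $\det(f(\cdot)-\lambda)$ on $\mathbb{T}$ obstructs essential boundedness of the pointwise inverse. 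Hence $\sigma(L(f))=\sigma_{\det}(f)$.

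For the essential spectrum, the inclusion $\sigma_{\mathrm{ess}}(L(f))\subseteq\sigma(L(f))=\sigma_{\det}(f)$ is automatic. For the reverse inclusion, given $\lambda\in\sigma_{\det}(f)$, pick $z_0\in\mathbb{T}$ with $\det(f(z_0)-\lambda)=0$, choose a unit vector $v_0\in\C^k$ with $(f(z_0)-\lambda)v_0=\vect 0$, and take a sequence of unit-norm scalar bump functions $\chi_n\in L^2(\mathbb{T})$ supported in arcs shrinking to $z_0$. Setting $u_n:=\chi_n v_0\in L^2(\mathbb{T})^k$, one gets a singular Weyl sequence: $\|u_n\|_2=1$, $u_n\rightharpoonup 0$ weakly, and, by continuity of $f$ at $z_0$,
\[
\|(f-\lambda)u_n\|_2^2=\int_{\mathbb{T}}|\chi_n(z)|^2\,\|(f(z)-\lambda)v_0\|_2^2\,dz\xrightarrow{n\to\infty}0,
\]
which precludes $M_{f-\lambda}$ from being Fredholm. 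Therefore $\sigma_{\det}(f)\subseteq\sigma_{\mathrm{ess}}(L(f))$, and the three sets coincide.

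The only nontrivial step is the Weyl sequence construction, which is entirely standard; alternatively, one may invoke \cite[Chapter 23, Corollary 2.5]{gohberg1993basic} verbatim, since the tridiagonal $k$-Laurent operator \eqref{equ:tridiagonalklaurentop2} is a particular instance of the block Laurent operators treated there.
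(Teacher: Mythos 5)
Your proof is correct, but it follows a genuinely different route from the paper's. The paper gives no argument at all beyond the identification $L(f-\lambda)=L(f)-\lambda$ and a direct appeal to \cite[Chapter 23, Corollary 2.5]{gohberg1993basic} — which is precisely the ``alternative'' you mention in your last sentence. Your primary route instead conjugates $L(f)$ by the Fourier transform into the multiplication operator $M_f$ on $L^2(\mathbb{T})^k$, identifies invertibility of $M_{f-\lambda}$ with the nonvanishing of $\det(f(\cdot)-\lambda)$ on the compact set $\mathbb{T}$ (using continuity of the Laurent-polynomial symbol and the adjugate formula for the pointwise inverse), and then closes the gap between $\sigma$ and $\sigma_{\mathrm{ess}}$ with a singular Weyl sequence $u_n=\chi_n v_0$ concentrating at a zero $z_0$ of the determinant. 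All three steps are sound: the weak nullity of $u_n$ together with $\|(M_f-\lambda)u_n\|\to 0$ does rule out Fredholmness, since a Fredholm operator is bounded below on the orthogonal complement of its finite-dimensional kernel and the projection onto that kernel annihilates weakly null sequences in the limit. What the paper's approach buys is brevity and consistency with how it handles the Toeplitz case (everything is routed through Gohberg's Fredholm theory); what yours buys is a self-contained, elementary argument that actually proves $\sigma(L(f))=\sigma_{\mathrm{ess}}(L(f))=\sigma_{\det}(f)$ rather than importing it, and which makes transparent why the Laurent spectrum, unlike the Toeplitz one, carries no winding-number contribution.
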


\subsection{Eigenvectors of tridiagonal $k$-Toeplitz operators}

Now that we have characterized the spectra of tridiagonal $k$-Toeplitz operators, we will venture onwards to explore their eigenvectors. The following theorem is the second main result of this paper.

\begin{theorem}\label{thm: exponential_decay_k_operators}
        Suppose $\Pi_{j=1}^k c_j\neq 0$ and $\Pi_{j=1}^k b_j\neq 0$. Let $f(z) \in \mathbb{C}^{k\times k}$ be the symbol (\ref{eq: symbol tridiagonal operator}) and let $\lambda \in \C\setminus\sigma_{ess}(T(f))$. If 
        $\sum_{j=1}^{k}\operatorname{wind}(\lambda_j(\mathbb T), \lambda)<0$ for $\lambda_j$ being the eigenvalues of  $f(z)$, then there exists an eigenvector $\bm x\in\ell^2$ of $T(f)$ associated to $\lambda$ and some $\rho<1$ such that 
    \begin{equation}\label{eq: bound on eigenvector}
        \frac{\lvert \bm x_j\rvert}{\max_{i}\lvert \bm x_i\rvert} \leq C \lceil j/k\rceil \rho^{\lceil j/k\rceil-1},\quad j\geq 1,
    \end{equation}
   where $C>0$ is a constant depending only on $\lambda, a_p, b_p, c_p$ for $1\leq p\leq k$. 
\end{theorem}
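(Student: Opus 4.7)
My plan is to combine a Fredholm-index argument that produces a nontrivial kernel element with a Bloch-mode analysis of the bulk recurrence, whose characteristic polynomial turns out to have degree exactly two. The main obstacle will be converting the winding hypothesis into a root-location statement: the quadratic structure must be used to show that both Bloch multipliers sit strictly inside the unit disc, and the possible degeneracy $\mu_1 = \mu_2$ must be handled to account for the $\lceil j/k\rceil$ prefactor in \eqref{eq: bound on eigenvector}.

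First, since $\lambda \notin \sigma_{\mathrm{ess}}(T(f))$, Theorem \ref{thm: essential spectrum} ensures that $\det(f-\lambda)$ does not vanish on $\mathbb{T}$, so $T(f) - \lambda I$ is Fredholm. By Theorem \ref{thm: windingnumber},
\[
\operatorname{Ind}(T(f) - \lambda I) \;=\; -\operatorname{wind}(\det(f-\lambda), 0) \;=\; -\sum_{j=1}^{k}\operatorname{wind}(\lambda_j(\mathbb{T}), \lambda) \;>\; 0,
\]
so $\ker(T(f) - \lambda I)$ contains a nonzero $\mathbf{x} = (x_1, x_2, \ldots) \in \ell^2$. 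In scalar form, $\mathbf{x}$ satisfies the period-$k$ tridiagonal recurrence $c_{i-1}x_{i-1} + a_i x_i + b_i x_{i+1} = \lambda x_i$ for $i \geq 1$, with the convention $x_0 = 0$. This is a second-order linear recurrence whose one-period transfer matrix $M$ has two eigenvalues $\mu_1, \mu_2$ (the Bloch multipliers) controlling the growth or decay of solutions per period of length $k$.

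Second, I would invoke Lemma \ref{lem:tridiagonaldeterminant} to write
\[
q(z) \;:=\; z\det(f(z) - \lambda) \;=\; (-1)^{k+1}\Bigl(\prod_{i=1}^{k} c_i\Bigr) z^2 \;+\; g(\lambda)\, z \;+\; (-1)^{k+1}\Bigl(\prod_{i=1}^{k} b_i\Bigr),
\]
which is a polynomial of degree exactly two since both its leading and constant coefficients are nonzero by hypothesis. A direct identification via the Bloch ansatz $\mathbf{x}^{(n)} = \mu^{n}\mathbf{v}$ in the block form of the recurrence yields $\mu_1 = 1/z_1$ and $\mu_2 = 1/z_2$, where $z_1, z_2$ are the two roots of $q$ (consistent with $\mu_1\mu_2 = \det M = \prod c_i/\prod b_i$). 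Since $\det(f-\lambda) = q(z)/z$ has a simple pole at $0$ and zeros precisely at $z_1, z_2$, the argument principle gives
\[
\operatorname{wind}(\det(f(\mathbb{T})-\lambda), 0) \;=\; \#\{i : |z_i|<1\} \;-\; 1 \;\in\; \{-1,\, 0,\, 1\}.
\]
The winding hypothesis forces this quantity to equal $-1$, so neither $z_i$ lies in the open unit disc, and the assumption $\lambda \notin \sigma_{\mathrm{det}}(f)$ excludes any root on $\mathbb{T}$. Therefore $|\mu_1|, |\mu_2| < 1$ strictly, and I set $\rho := \max(|\mu_1|, |\mu_2|) < 1$.

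Finally, I would finish by expressing $\mathbf{x}$ as a linear combination of the two fundamental solutions of the bulk recurrence. When $\mu_1 \neq \mu_2$, these are pure Bloch modes of the form $x_{(n-1)k+j} = \mu_r^{n-1}\phi_j^{(r)}$, so any kernel element trivially satisfies $|x_i| \leq C\rho^{\lceil i/k\rceil - 1}$. In the coincident case $\mu_1 = \mu_2$, when $M$ carries a nontrivial Jordan block, a second independent solution picks up an additional factor $\lceil i/k\rceil - 1$, yielding the bound $|x_i|/\max_j |x_j| \leq C\lceil i/k\rceil\,\rho^{\lceil i/k\rceil - 1}$ asserted in \eqref{eq: bound on eigenvector}. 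The constant $C$ depends only on $\lambda$ and the coefficients through the normalised eigen/generalised-eigenvectors of $M$, completing the proof.
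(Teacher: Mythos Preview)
Your argument is correct and takes a genuinely different route from the paper's proof. The paper proceeds \emph{constructively}: it first locates the two roots $z_1,z_2$ of $\det(f(z)-\lambda)=0$ outside the unit circle via the argument principle, then builds the candidate vectors $\mathbf{u}_i=(\mathbf{v}_i^\top, z_i^{-1}\mathbf{v}_i^\top,\ldots)^\top$ from eigenvectors $\mathbf{v}_i$ of $f(z_i)$, and finally takes a linear combination that also satisfies the first-row boundary condition. The degenerate case $z_1=z_2$ with one-dimensional eigenspace is handled by a limiting argument, perturbing $\lambda$ to split the roots and passing to the limit of a suitable difference quotient. You instead first secure the \emph{existence} of a kernel element via the Fredholm index (Theorem~\ref{thm: windingnumber}), and then argue that any such element, being a solution of the period-$k$ scalar recurrence, must lie in the two-dimensional span of the fundamental solutions of the transfer matrix $M$; the same root-location argument then forces both eigenvalues of $M$ inside the unit disc, and the Jordan block case is handled by the standard $M^n$ formula rather than by approximation. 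Your approach is somewhat more economical (it bypasses the explicit boundary-matching and the approximation argument, and in fact shows that \emph{every} kernel element decays), while the paper's constructive approach has the advantage of producing the eigenvector explicitly, which is reused directly in the proof of Theorem~\ref{thm: exponential_decay_tridiag_k_matrices} for the finite-section pseudoeigenvectors.
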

\begin{proof}
    Let \(\lambda \in \mathbb{C}\) be such that \(\sum_{j=1}^{k}\operatorname{wind}(\lambda_j(\mathbb T), \lambda) = \operatorname{wind}(\det(f(\mathbb T)-\lambda),0)<0\). Since $\prod_{i=1}^kb_i \neq 0, \prod_{i=1}^kc_i \neq 0$, by Lemma \ref{lem:tridiagonaldeterminant}, \(\det(f(z)-\lambda)\) is a meromorphic function in $\mathbb C$, with poles at $0$ and $\infty$. 
    Then the argument principle implies that \(\det(f(z)-\lambda) = 0\) for two values \(z_1,z_2\) with $1<\frac{1}{\rho}\leq \babs{z_j}, j=1,2$, counted with multiplicity. Moreover, since \(\det(f(z_i)-\lambda) = 0\), we find that \(\lambda \) is an eigenvalue to \(f(z_i)\) and hence there exists an associated eigenvector \(\mathbf{v}_i\). Assume for now that \(z_1 \neq z_2\). Consider the vectors defined by
    \begin{equation}
        \mathbf{u}_i= (\mathbf{v}_i^{\top}, z_i^{-1} \mathbf{v}_i^{\top}, z_i^{-2} \mathbf{v}_i^{\top}, \dots)^\top,  \quad i \in \{1,2\}.
    \end{equation}
    The vector $\mathbf{u}_i$ satisfies the eigenvalue equation \(T(f)\mathbf{u}_i = \lambda\mathbf{u}_i\) in all but the first row. Since $\vect u_1, \vect u_2$ are linearly independent, there exists a linear combination of them that is indeed an eigenvector of \(T(f)\).

    Next, we consider the case where $z_1 = z_2$ and first suppose that $\dim\operatorname{Eig}(f(z_1),\lambda)>1$, where $\operatorname{Eig}(f(z_1),\lambda)$ denotes the eigenspace of $f(z_1)$ associated to the eigenvalue $\lambda$. Then there exist linearly independent eigenvectors $\vect v_1, \vect v_2$ of $f(z_1)$ associated to $\lambda$ and we can use
    \begin{equation}
        \mathbf{u}_i= (\mathbf{v}_i^{\top}, z_1^{-1} \mathbf{v}_i^{\top}, z_1^{-2} \mathbf{v}_i^{\top}, \dots)^\top \quad i \in \{1,2\},
    \end{equation}
    to construct an exponentially decaying eigenvector of $T(f)$. Finally, it remains to treat the case when $z_1 = z_2$ and $\dim\operatorname{Eig}(f(z_1),\lambda)=1$. Denote $\vect v_1$ the eigenvector of $f(z_1)$ associated to $\lambda$. We will construct a vector $\vect v_2$ such that 
    \begin{equation}
        \vect u_2 = (\vect v_2^{\top},\ z_1^{-1} \vect v_2^{\top} + \vect v_1^{\top},\ z_1^{-2} \vect v_2^{\top} + 2z_1^{-1} \vect v_1^{\top},\ \dots)^\top
    \end{equation}
   satisfies the eigenvalue problem in all but the first row. Similarly to the previous arguments, this is enough to construct the eigenvector of $T(f)$. By explicitly writing down $(T(f)-\lambda)\vect u_2=0$ (except for the first row), we find that $\vect v_2$ must satisfy the condition
   \begin{equation}\label{eq: condition}
       (A_{1} + (A_0 - \lambda)z_1^{-1} + A_{-1} z_1^{-2})\vect v_2 = -((A_0 - \lambda) + 2A_{-1}z_1^{-1})\vect v_1,
   \end{equation}
   which is also a sufficient condition. From Lemma \ref{lem:tridiagonaldeterminant}, it is not hard to see that for all $\Tilde{\lambda}$ in a small neighborhood of $\lambda$, the equation $\det(f(z)-\Tilde{\lambda})$ has two distinct roots. Let $(\epsilon_j)_j$ be a sequence converging to zero. We define $\lambda(\epsilon_j) = \lambda + \epsilon_j$. Then by the previous observation, we may assume that to each $\lambda(\epsilon_j)$ there exist two distinct roots $z_1(\epsilon_j), z_2(\epsilon_j)$ of the equation $\det(f(z)-\lambda(\epsilon_j))=0$. Let $\vect w_1(\epsilon_j), \vect w_2(\epsilon_j)$ be the corresponding unit eigenvectors. We may assume that
   \begin{equation}
       \vect w_1(\epsilon_j)\rightarrow \vect v_0,\quad  \vect w_2(\epsilon_j) \rightarrow \vect v_0.
   \end{equation}
   The limit of $\vect w_i(\epsilon_j)$ (or a subsequence thereof) exists by a standard compactness argument and it must be an eigenvector of $f(z_1)$ associated to $\lambda$. In particular, for each $j$ it holds that
   \begin{align*}
    &\left(A_{1}+(A_0-\lambda(\epsilon_j))z_1(\epsilon_j)^{-1}+A_{-1} z_1(\epsilon_j)^{-2}\right) \vect {w_1}(\epsilon_j) =0,\\
    &\left(A_{1}+(A_0-\lambda(\epsilon_j))z_2(\epsilon_j)^{-1}+A_{-1} z_2(\epsilon_j)^{-2}\right) \vect {w_2}(\epsilon_j) =0.
    \end{align*}
    We define
    \begin{equation*}
        \Delta z(\epsilon_j) = z_1(\epsilon_j)^{-1} - z_2(\epsilon_j)^{-1}.
    \end{equation*}
    Then we have
    \begin{align*} 
        &A_{1} \frac{\vect w_1(\epsilon_j)-\vect w_2(\epsilon_j)}{\Delta z(\epsilon_j)}+ (A_0-\lambda(\epsilon_j))\frac{z_1(\epsilon_j)^{-1}\vect w_1(\epsilon_j)-z_2(\epsilon_j)^{-1}\vect w_2(\epsilon_j)}{\Delta z(\epsilon_j)}\\
        & \qquad +A_{-1}\frac{z_1(\epsilon_j)^{-2}\vect w_1(\epsilon_j)-z_2(\epsilon_j)^{-2}\vect w_2(\epsilon_j)}{\Delta z(\epsilon_j)} =0.
    \end{align*}    
    Further expansion of this expression then yields
    \begin{equation}\label{eq: proofdecay1}
    \begin{aligned}
        &A_{1}\frac{\vect w_1(\epsilon_j)-\vect w_2(\epsilon_j)}{\Delta z(\epsilon_j)}+ (A_0-\lambda(\epsilon_j))\frac{z_1(\epsilon_j)^{-1}(\vect w_1(\epsilon_j)-\vect w_2(\epsilon_j))}{\Delta z(\epsilon_j)}+A_{-1} \frac{z_1(\epsilon_j)^{-2}(\vect w_1(\epsilon_j)-\vect w_2(\epsilon_j))}{\Delta z(\epsilon_j)}\\
        =&- \left((A_0-\lambda(\epsilon_j))\frac{(z_1(\epsilon_j)^{-1}-z_2(\epsilon_j)^{-1})\vect w_2(\epsilon_j))}{\Delta z(\epsilon_j)}+A_{-1} \frac{(z_1(\epsilon_j)^{-2}-z_2(\epsilon_j)^{-2})\vect w_2(\epsilon_j)}{\Delta z(\epsilon_j)}\right).
    \end{aligned} 
    \end{equation}
    Consider now the sequence $\frac{\vect w_1(\epsilon_j) - \vect w_2(\epsilon_j)}{\Delta z(\epsilon_j)}$. If a bounded subsequence exists, we may define $\vect v_2$ as the limit of this subsequence. In this case, it is not hard to see that $\vect v_2$ satisfies the condition (\ref{eq: condition}). Otherwise, there exists a subsequence of  $\frac{\vect w_1(\epsilon_j) - \vect w_2(\epsilon_j)}{\Delta z(\epsilon_j)}$ whose norm tends to infinity. Define 
    \begin{equation*}
        \Delta \vect w(\epsilon_j) = \lVert \vect w_1(\epsilon_j) - \vect w_2(\epsilon_j) \rVert.
    \end{equation*}
    Then (upon passing to said subsequence) it holds that
    \begin{equation}
        \frac{\Delta z(\epsilon_j)}{\Delta \vect w(\epsilon_j)} \rightarrow 0.
    \end{equation}
    Multiplying (\ref{eq: proofdecay1}) with this yields
    \begin{align*}
        &A_{1} \frac{\vect w_1(\epsilon_j)-\vect w_2(\epsilon_j)}{\Delta \vect w(\epsilon_j)}+ (A_0-\lambda(\epsilon_j))\frac{z_1(\epsilon_j)^{-1}(\vect w_1(\epsilon_j)-\vect w_2(\epsilon_j))}{\Delta\vect w(\epsilon_j)}+A_{-1} \frac{z_1(\epsilon_j)^{-2}(\vect w_1(\epsilon_j)-\vect w_2(\epsilon_j))}{\Delta\vect w(\epsilon_j)}\\
        =&- \left((A_0-\lambda(\epsilon_j))\frac{(z_1(\epsilon_j)^{-1}-z_2(\epsilon_j)^{-1})\vect w_2(\epsilon_j))}{\Delta\vect w(\epsilon_j)}+A_{-1} \frac{(z_1(\epsilon_j)^{-2}-z_2(\epsilon_j)^{-2})\vect w_2(\epsilon_j)}{\Delta\vect w(\epsilon_j)}\right).
    \end{align*}
    In the limit, the right-hand side tends to $0$. Thus $\vect v_2 = \lim_{\epsilon_j \to 0} \frac{\vect w_1(\epsilon_j) - \vect w_2(\epsilon_j)}{\Delta \vect w(\epsilon_j)}$ satisfies
    \begin{equation*}
        \left( A_{1} + (A_0-\lambda)z_1^{-1} + A_{-1} z_1^{-2}\right)\vect v_2 = 0.
    \end{equation*}
    This indicates that $\vect v_2$ is an eigenvector. But from the construction of $\vect v_2$, it becomes apparent that $\vect v_2$ must be orthogonal to $\vect v_1$, which contradicts the assumption that the eigenspace of $f(z_1)$ associated with $\lambda$ is one-dimensional. Therefore, the sequence $\frac{\vect w_1(\epsilon_j) - \vect w_2(\epsilon_j)}{\Delta z(\epsilon_j)}$ is uniformly bounded and arguments for the first case already demonstrate the existence of such $\vect v_2$ satisfying (\ref{eq: condition}).  

    The exponential bound in (\ref{eq: bound on eigenvector}) is clear by the construction of the vectors $\vect u_1$ and $\vect u_2$.
\end{proof}

\begin{remark}
Note that the constructive approach used in the proof of \cite[Theorem 7.2]{trefethen.embree2005Spectra} does not work here. We have used a new approximation approach to construct the eigenvectors. 
\end{remark}

\begin{remark}
For the case when $\sum_{j=1}^{k}\operatorname{wind}(\lambda_j(\mathbb T), \lambda)>0$, $\lambda \in \sigma(T(f))$ is because $T(f)$ has a non-dense image. In particular, one can construct the eigenvector $\bm x$ of $\overline{T(f)}^{\top}-\overline{\lambda}$ in the same spirit of Theorem \ref{thm: exponential_decay_k_operators}, which implies $T(f)-\lambda$ has a non-dense image.  
\end{remark}




\section{Pseudo-spectra of tridiagonal $k$-Toeplitz matrices}\label{sec: spectral_Theory_matrix}
In this section, we will apply the results of the previous section to matrices. As already described by Trefethen in \cite{REICHEL1992153}, the spectrum of non-Hermitian matrices is very sensitive to small perturbations. For this reason, it is often advisable to study their pseudo-spectra, rather than their spectra. For the convenience of the reader, we recall the definition of pseudoeigenvalues and pseudoeigenvectors.
\begin{definition}\label{def: pseudospectrum}
    Let \(\varepsilon > 0\). Then, \(\lambda \in \mathbb{C}\) is an \(\varepsilon\)-pseudoeigenvalue of \(\mathbf{A}\in \mathbb{C}^{N\times N}\) if one of the following conditions is satisfied:
    \begin{enumerate}[(i)]
        \item \(\lambda\) is a proper eigenvalue of \(\mathbf{A} + \mathbf{E}\) for some \(\mathbf{E} \in \mathbb{C}^{N\times N}\) such that \(\lVert \mathbf{E} \rVert \leq \varepsilon\);
        \item $\lVert (\mathbf{A}-\lambda I) \mathbf{u} \rVert < \varepsilon$ for some vector $u$ with $\lVert \vect u \rVert = 1$;
        \item $\lVert (\mathbf{A}- \lambda I)^{-1} \rVert^{-1} \leq \epsilon$.
    \end{enumerate}
    The set of all $\varepsilon$-pseudoeigenvalues of $\vect A$, the $\varepsilon$-pseudospectrum, is denoted by $\sigma_{\epsilon}(A)$. If some nonzero $\vect u$ satisfies $\lVert (\lambda - \mathbf{A}) \mathbf{u} \rVert < \varepsilon$, then we say that $\vect u$ is an $\varepsilon$-pseudoeigenvector of $\mathbf{A}$.
\end{definition}
By the equivalence of norms in finite dimensions, any norm can be used. For a general treatment of pseudospectra, see \cite{trefethen.embree2005Spectra}.

We consider the family of tridiagonal $k$-Toeplitz matrices $\left\{\mathbf{A}_N\right\}$ of various dimensions obtained as $N \times N$ finite sections of the infinite operator $T(f)$. Results on the pseudospectra of the family of Toeplitz matrices can be found in \cite{REICHEL1992153}. On the other hand, the spectrum of tridiagonal $k$-Toeplitz matrices is fully understood and closed form characteristic polynomials have been found in \cite{da2007characteristic}.

Let us also introduce the sets
\begin{align}
    \Omega_r &:= \{ \lambda \in \C: \operatorname{wind}(\operatorname{det}(f(\mathbb{T}_r)- \lambda), 0) > 0 \} \label{eq: omega_r},\\
    \Omega^R &:= \{ \lambda \in \C: \operatorname{wind}(\operatorname{det}(f(\mathbb{T}_R)- \lambda), 0) < 0 \}\label{eq: omega^r}, 
\end{align}
where $\mathbb{T}_r = \{ z \in \C: \lvert z \rvert = r \}$ and let $\sigma_\epsilon(\mathbf{A}_N)$ be the set of $\epsilon$-pseudoeigenvalues of $\mathbf{A}_N$.

The connection between the eigenvectors of a tridiagonal $k$-Toeplitz operator and those of the associated tridiagonal $k$-Toeplitz matrix can be made with an argument similar to \cite[Theorem 7.2]{trefethen.embree2005Spectra}. The first $N$ components of an eigenvector of the Toeplitz operator can be used as the $\epsilon$-pseudoeigenvector of the tridiagonal $k$-Toeplitz matrix $\mathbf{A}_N$.
Since we have proven the existence of an exponentially decaying eigenvector in Theorem \ref{thm: exponential_decay_k_operators}, we can conclude with the following statement on tridiagonal $k$-Toeplitz matrices.
 \begin{theorem}\label{thm: exponential_decay_tridiag_k_matrices}
    Let \(\{\mathbf{A}_N\}\) be a family of tridiagonal $k$-Toeplitz matrices such that $\prod_{i=1}^kb_i \neq 0$, $
    \prod_{i=1}^kc_i \neq 0$. Then, for any $r<1$ and $\rho>r$, we have
\begin{equation}\label{eq:pseudoeigenvalue1}
\Omega_r \cup \Omega^{1 / r} \cup\left(\sigma(T(f))+\Delta_{\varepsilon}\right) \subseteq \sigma_{\varepsilon}\left(A_N\right) \quad \text { for } \quad \varepsilon=\max(C_1, \lceil N/k\rceil C_2)\rho^{\lceil N/k \rceil-1},
\end{equation}
where $\Delta_{\epsilon}$ is a closed unit disk of radius $\epsilon$ and $C_1, C_2$ are constants depending on $r, \rho$, and $a_j, b_j, c_j$ for $1\leq j\leq k$ but independent of $N$. In particular, for the corresponding $\lambda\in \Omega_r \cup \Omega^{1/r}$, there exist nonzero pseudoeigenvectors \(\mathbf{v}^{(N)}\) satisfying
     \begin{equation}\label{eq: bound eigenvector2}
         \frac{\lVert (\mathbf{A}_N-\lambda)\mathbf{v}^{(N)}\rVert}{\lVert \mathbf{v}^{(N)}\rVert} \leq \max(C_3, \lceil N/k\rceil C_4)\rho^{\lceil N/k \rceil-1}
     \end{equation}
     such that
    \begin{equation}\label{equ:exponentialdecayvector2}
         \frac{|\vect v_j^{(N)}|}{\max_{i}|\vect v_i^{(N)}|} \leq
        \begin{cases}
            C_5\lceil N/k\rceil \rho^{\lceil j/k \rceil-1}, & \text{if } 
            \lambda \in \Omega^{1/r}, \\
             C_5\lceil N/k\rceil\rho^{\lceil (N-j)/k\rceil-1}, & \text{if } \lambda \in \Omega_{r},
         \end{cases}
         \quad 1\leq j \leq N,
     \end{equation}
     where $C_3, C_4, C_5$ are constants independent of $N$.
 \end{theorem}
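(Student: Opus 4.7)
The plan is to construct, for each $\lambda$ appearing in the left-hand side of \eqref{eq:pseudoeigenvalue1}, a pseudo-eigenvector $\mathbf{v}^{(N)}$ of $\mathbf{A}_N$ by solving explicitly the two-sided block recurrence associated with $(f(z) - \lambda)\mathbf{v} = \mathbf{0}$ and then truncating or re-indexing. This follows the classical argument of \cite[Theorem 7.2]{trefethen.embree2005Spectra}, promoted to the block-tridiagonal setting via \cref{thm: exponential_decay_k_operators}. The three regions $\Omega^{1/r}$, $\Omega_r$, and $\sigma(T(f)) + \Delta_\epsilon$ are treated in turn.

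For $\lambda \in \Omega^{1/r}$, the inequality $\operatorname{wind}(\det(f(\mathbb{T}_{1/r}) - \lambda),0) < 0$ combined with \cref{lem:tridiagonaldeterminant} and the argument principle forces both roots $z_1, z_2$ of $\det(f(z)-\lambda) = 0$ to satisfy $|z_j| > 1/r$. Running the construction from \cref{thm: exponential_decay_k_operators} verbatim (with the same compactness/perturbation trick used there for the confluent case $z_1 = z_2$) produces an eigenvector $\mathbf{x} \in \ell^2$ of $T(f)$ whose entries obey $|x_j|/\max_i|x_i| \leq C\lceil j/k\rceil \rho^{\lceil j/k\rceil - 1}$ for any $\rho \in (r,1)$. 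Setting $\mathbf{v}^{(N)} := (x_1, \dots, x_N)^\top$, the residual $(\mathbf{A}_N - \lambda I)\mathbf{v}^{(N)}$ differs from the zero truncation of $(T(f) - \lambda I)\mathbf{x}$ only through the missing contribution involving $x_{N+1}$ in the last row of the $N$-th block. The bound of \cref{thm: exponential_decay_k_operators} makes this of order $\lceil N/k\rceil \rho^{\lceil N/k\rceil - 1}$, and since $\|\mathbf{v}^{(N)}\| \geq \max_i|x_i|$ stabilises for $N$ large, dividing yields \eqref{eq: bound eigenvector2} together with the left-decay form of \eqref{equ:exponentialdecayvector2}.

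For $\lambda \in \Omega_r$, the same analysis places both roots inside $\mathbb{T}_r$, so $|z_j| < r < 1$. No $\ell^2$ eigenvector of $T(f)$ exists, but on the finite matrix we use the reversed ansatz: denoting by $\mathbf{v}_i$ an eigenvector of $f(z_i)$ with eigenvalue $\lambda$ and writing $N = Mk$ (the case $N \not\equiv 0 \pmod k$ requires only minor bookkeeping), define
\begin{equation*}
    \mathbf{u}_i := \bigl(z_i^{M-1}\mathbf{v}_i^\top,\ z_i^{M-2}\mathbf{v}_i^\top,\ \dots,\ z_i\mathbf{v}_i^\top,\ \mathbf{v}_i^\top\bigr)^\top, \quad i \in \{1,2\}.
\end{equation*}
A direct block-row computation using $f(z_i)\mathbf{v}_i = \lambda\mathbf{v}_i$ shows that $(\mathbf{A}_N - \lambda I)\mathbf{u}_i$ is supported only in the topmost and bottommost blocks, with the top-block residual of order $|z_i|^M = O(\rho^{\lceil N/k\rceil})$ and the bottom-block residual, which lies in the one-dimensional span of $\mathbf{e}_k$ (via the single nonzero entry of $A_{-1}$), of order $|z_i|^{-1} = O(1)$. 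A two-parameter linear combination $\mathbf{v}^{(N)} = \alpha_1 \mathbf{u}_1 + \alpha_2 \mathbf{u}_2$ kills the $\mathbf{e}_k$-component at the bottom, leaving only the exponentially small top residual; the confluent case $z_1 = z_2$ is handled exactly as in \cref{thm: exponential_decay_k_operators}. Because the bottommost block of $\mathbf{u}_i$ is $\mathbf{v}_i$ itself, $\|\mathbf{v}^{(N)}\|$ is bounded below independently of $N$, and the right-decay form of \eqref{equ:exponentialdecayvector2} follows from $|z_i|^{M-m} \leq \rho^{\lceil (N-j)/k\rceil - 1}$.

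Finally, the inclusion $\sigma(T(f)) + \Delta_\epsilon \subseteq \sigma_\epsilon(\mathbf{A}_N)$ is established by a standard approximate-eigenvector argument using \cref{thm: windingspectrum}: for $\mu \in \sigma_{\det}(f)$ with $\det(f(z_0) - \mu) = 0$ and $|z_0| = 1$, the vector $\mathbf{u}^{(N)} = (\mathbf{v}^\top, z_0^{-1}\mathbf{v}^\top, \dots, z_0^{-(M-1)}\mathbf{v}^\top)^\top$ has norm of order $\sqrt{N/k}$ while its residual is supported on $O(1)$ boundary entries, giving a vanishing relative error; for $\mu \in \sigma_\mathrm{wind}(f)$ the preceding two constructions already apply; and the finitely many points coming from $\sigma(B_0)$ are absorbed into the $\epsilon$-disk for $N$ sufficiently large. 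The main obstacle is obtaining constants $C_1, \dots, C_5$ that are uniform in $N$ and in $\lambda$ inside $\Omega_r \cup \Omega^{1/r}$: this reduces to controlling the cancellation coefficients $\alpha_1, \alpha_2$ together with the components $\mathbf{v}_i^{(1)}, \mathbf{v}_i^{(k)}$ as $\lambda$ varies, which follows from the same compactness/perturbation argument that governs the confluent limit in the proof of \cref{thm: exponential_decay_k_operators}.
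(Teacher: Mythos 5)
Your core construction coincides with the paper's. For $\lambda\in\Omega^{1/r}$ the paper builds the truncated quasi-modes $\mathbf{u}_i=(\mathbf{v}_i^{\top},z_i^{-1}\mathbf{v}_i^{\top},\dots)^\top$ directly on $\mathbb{C}^N$, takes the linear combination that annihilates the first block row, and handles the confluent root $z_1=z_2$ with the Jordan-type ansatz from \cref{thm: exponential_decay_k_operators}; your truncation of the $\ell^2$ eigenvector of $T(f)$ produces exactly that vector, and your uniformity argument (compactness over $\overline{\Omega^{1/r}}$) matches the paper's continuity-of-$\sup_N$ argument. For $\Omega_r$ the paper simply invokes symmetry, whereas you spell out the mirrored ansatz with the residual confined to $\operatorname{span}(\mathbf{e}_k)$ at the bottom block; this is a correct and slightly more explicit rendering of the same step, and your verification that the interior block rows vanish and that the top residual is $O(|z_i|^{M})$ is sound.

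The one place where your argument does not deliver the statement as written is the inclusion $\sigma(T(f))+\Delta_{\varepsilon}\subseteq\sigma_{\varepsilon}(\mathbf{A}_N)$. For $\mu\in\sigma_{\det}(f)$ your oscillatory quasi-mode with $|z_0|=1$ has norm of order $\sqrt{N/k}$ and an $O(1)$ boundary residual, so it only certifies $\mu\in\sigma_{\epsilon'}(\mathbf{A}_N)$ for $\epsilon'\sim N^{-1/2}$, which is far larger than the exponentially small $\varepsilon=\max(C_1,\lceil N/k\rceil C_2)\rho^{\lceil N/k\rceil-1}$ required by \eqref{eq:pseudoeigenvalue1}; the same issue affects absorbing $\sigma(B_0)$ into the $\varepsilon$-disk. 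To be fair, the paper itself dispatches this part with the one-line assertion that the inclusion is ``trivially valid for any matrix or operator'' (which is the standard fact for $\sigma(\mathbf{A}_N)+\Delta_\varepsilon$, not obviously for $\sigma(T(f))+\Delta_\varepsilon$), so you are not worse off than the source — but as written your step does not close the gap between an algebraically small and an exponentially small residual, and you should either flag it or restrict the claim accordingly.
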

 
 \begin{proof}
Firstly, we note that the inclusion $\sigma(T(f))+\Delta_{\varepsilon} \subseteq \sigma_{\varepsilon}\left(A_N\right)$ is triviality valid for any matrix or operator. Secondly, by symmetry, $\Omega_{r}$ must satisfy an estimate of the type (\ref{eq:pseudoeigenvalue1}) if $\Omega^{1/r}$ does. Thus all that we have to prove is $\Omega^{1/r} \subseteq \sigma_{\varepsilon}\left(\vect A_N\right)$.

    Given any $r<1$, let $\lambda \in \Omega^{1/r}$ be arbitrary. Since $\operatorname{wind}(\det(f(\mathbb T_{\frac{1}{r}})-\lambda),0) < 0$, similarly to the arguments in the proof of Theorem \ref{thm: exponential_decay_k_operators}, we can show that \(\det(f(z)-\lambda) = 0\) for two values \(z_1,z_2\) with $1<\frac{1}{\rho}\leq \babs{z_j}, j=1,2$, counted with multiplicity. Also, since \(\det(f(z_i)-\lambda) = 0\), we find that \(\lambda \) is an eigenvalue to \(f(z_i)\) and hence there exists an associated eigenvector \(\mathbf{v}_i\). Assume for now that \(z_1 \neq z_2\). Consider the vectors defined by
    \begin{equation}\label{equ:proofpesuospectra1}
        \mathbf{u}_i= (\mathbf{v}_i^{\top}, z_i^{-1} \mathbf{v}_i^{\top}, z_i^{-2} \mathbf{v}_i^{\top}, \dots, z_{i}^{-\lceil N/k \rceil+1}\vect v_i^{\top})^\top \quad i \in \{1,2\},
    \end{equation}
    where without loss of generality we assume that $N$ is divisible by $k$.
    This vector satisfies the eigenvalue equation \(\vect A_N \mathbf{u}_i = \lambda\mathbf{u}_i\) in all but the first and last rows. Since $\vect u_1, \vect u_2$ are linearly independent, there exists a normalized linear combination $\vect v^{(N)}= a_1\vect u_1+ a_2 \vect u_2$ so that $(\vect A_N-\lambda) \vect v^{(N)} = \vect 0$ in all but the last row. It is not hard to see that 
    \begin{equation}
        \lVert (\vect A_N - \lambda)\vect v^{(N)}\rVert \leq C_3\max \{ \babs{z_1}^{-\lceil N/k \rceil+1}, \babs{z_2}^{-\lceil N/k \rceil+1} \},
    \end{equation}
    where $C_3$ depends on $a_j, b_j, c_j, \lambda$ but not on $N$. Then since $1<\frac{1}{\rho}\leq \babs{z_j}, j=1,2,$ we obtain the bound
    \begin{equation}
        \lVert (\vect A_N - \lambda)\vect v^{(N)}\rVert \leq C_3 \rho^{\lceil N/k \rceil-1},
    \end{equation}
    which proves the claim (\ref{eq: bound eigenvector2}) if $z_1 \neq z_2$. If the roots $z_1, z_2$ are distinct for all $\lambda \in \overline{\Omega^{1/r}}$, then the function 
    \[
\sup _{N \geq 1} \frac{\sigma_N\left(\lambda I-\vect A_N\right)}{r^{\lceil N/k \rceil-1}}
\]
is a continuous function of $\lambda$ on the compact set $\overline{\Omega^{1/r}}$; its maximum provides the constant $C_1$ in (\ref{eq:pseudoeigenvalue1}) that independent of $\lambda, N$, and we can take $\rho = r$.  
    
In the case where $z_1=z_2$, one can resort to a similar argument as the one in Theorem \ref{thm: exponential_decay_k_operators} to utilize
\begin{equation}\label{equ:proofpesuospectra2}
\begin{aligned}
\mathbf{u}_1&= (\mathbf{v}_1^{\top},\ z_1^{-1} \mathbf{v}_1^{\top},\ z_1^{-2} \mathbf{v}_1^{\top},\ \dots,\ z_{1}^{-\lceil \frac{N}{k} \rceil+1}\vect v_1^{\top})^\top,\\
\vect u_2 &= (\vect v_2^{\top},\ z_1^{-1} \vect v_2^{\top} + \vect v_1^{\top},\ z_1^{-2} \vect v_2^{\top} + 2z_1^{-1} \vect v_1^{\top},\ \dots,\ z_{1}^{-\lceil \frac{N}{k} \rceil+1}\vect v_2^{\top}+ (\lceil \frac{N}{k} \rceil-1)z_{1}^{-\lceil \frac{N}{k} \rceil+2}\vect v_1^{\top})^\top
\end{aligned}
\end{equation}
to construct the pseudoeigenvector. This yields a similar bound except with $C_3\rho^{\lceil N/k \rceil-1}$ in (\ref{eq: bound eigenvector2}) replaced by an algebraically growing factor at worst $\lceil N/k\rceil C_4\rho^{\lceil N/k \rceil-1}$. Then since the function 
    \[
\sup _{N \geq 1} \frac{\sigma_N\left(\lambda I-\vect A_N\right)}{\lceil N/k\rceil r^{\lceil N/k \rceil-1}}
\]
is a continuous function of $\lambda$ on the compact set $\overline{\Omega^{1/r}}$, its maximum provides the constant $C_2$ in (\ref{eq:pseudoeigenvalue1}). Finally, (\ref{equ:exponentialdecayvector2}) is deduced from construction (\ref{equ:proofpesuospectra1}) and (\ref{equ:proofpesuospectra2}). 
\end{proof}

\section{Topological origin of the skin effect in polymer systems of subwavelength resonators}\label{Sec: Topological_origin}

The non-Hermitian skin effect is the phenomenon whereby in the subwavelength regime a large proportion of  the bulk eigenmodes of a non-Hermitian open system of subwavelength resonators are localised at one edge of the structure \cite{ashida.gong.ea2020NonHermitian, okuma.kawabata.ea2020Topological}. It has been realized experimentally in topological photonics, phononics, and other condensed matter systems \cite{ghatak.brandenbourger.ea2020Observation, okuma.kawabata.ea2020Topological, longhi}. Its significance  lies in its substantial contribution to advancing the realm of active metamaterials, paving the way for novel opportunities to guide and control energy on subwavelength scales \cite{ammari.barandun.ea2023Edge, ammari.davies.ea2021Functional, ammari.davies.ea2020Topologically, feppon.cheng.ea2023Subwavelength}.

Recently, the non-Hermitian skin effect has been demonstrated in one-dimensional systems of subwavelength resonators using first-principle mathematical analysis. In particular, \cite{ammari2023mathematical} demonstrates the skin effect for the monomer case 
when an imaginary gauge potential is introduced inside the resonators to break Hermiticity and \cite{dimerSkin} generalizes the results in \cite{ammari2023mathematical} to dimer systems. We also refer to \cite{3DSkin} for the skin effect in a three-dimensional setting and to \cite{stabilitySkin} for stability analysis. 

By the results obtained in the previous sections, we provide the topological origin of the skin effect in the one-dimensional polymer system, i.e., in systems with periodically repeated cells of $k$ resonators. This also elucidates the numerical findings in \cite{dimerSkin} for the topological origin of the skin effect.

Let us first introduce the model setting. We consider a chain of $N$ disjoint one-dimensional resonators \(D := (x_i^{\mathrm{L}}, x_i^{\mathrm{R}}\)), where \((x_i^{\mathrm{L}, \mathrm R})_{1 \leq i \leq N} \subset \R\) are the $2N$ extremities such that $x_i^{\mathrm{L}} < x_i^{\mathrm{R}} < x_{i+1}^{\mathrm{L}}$ for any $1 \leq i \leq N$. The lengths of the $i$-th resonators will be denoted by $\ell_i = x_i^{\mathrm{R}}-x_i^{\mathrm{L}}$ and the spacings between the $i$-th and the $(i+1)$-th resonators will be denoted by $s_i = x_{i+1}^{\mathrm{L}}-x_i^{\mathrm{R}}$. Furthermore, we assume that $k$ resonators are periodically repeated, that is, $s_{i+k} = s_i$. The resulting system is illustrated in Figure \ref{fig:setting}.

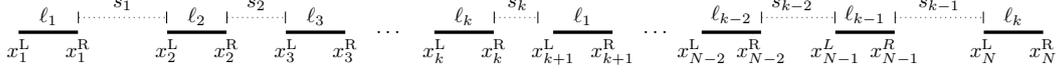
\begin{figure}[htb]
    \centering
    \begin{adjustbox}{width=\textwidth}
    \begin{tikzpicture}
        \coordinate (x1l) at (1,0);                 
        \path (x1l) +(1,0) coordinate (x1r);        
        \path (x1r) +(0.75,0.7) coordinate (s1);
        \path (x1r) +(1.5,0) coordinate (x2l);      
        \path (x2l) +(1,0) coordinate (x2r);
        \path (x2r) +(0.5,0.7) coordinate (s2);
        \path (x2r) +(1,0) coordinate (x3l);        
        \path (x3l) +(1,0) coordinate (x3r);
        \path (x3r) +(1,0.7) coordinate (s3);
        \path (x3r) +(1.5,0) coordinate (x4l);
        \path (x3r) +(0.75,0) coordinate (dots1);   
        \path (x4l) +(1,0) coordinate (x4r);        
        \path (x4r) +(0.4,0.7) coordinate (s4);
        \path (x4r) +(1,0) coordinate (x5l);        
        \path (x5l) + (1,0) coordinate (x5r);
        \path (x5r) + (0.75,0) coordinate (dots2);  
        \path (dots2) +(0.75,0) coordinate (x6l);   
        \path (x6l) +(1,0) coordinate (x6r);
        \path (x6r) +(1.25,0) coordinate (x7l);     
        \path (x6r) +(0.525,0.7) coordinate (s6);
        \path (x7l) +(1,0) coordinate (x7r);
        \path (x7r) +(1.5,0) coordinate (x8l);      
        \path (x7r) +(0.75,0.7) coordinate (s7);
        \path (x8l) +(1,0) coordinate (x8r);
        \draw[ultra thick] (x1l) -- (x1r);      
        \node[anchor=north] (label1) at (x1l) {$x_1^{\mathrm{L}}$};
        \node[anchor=north] (label1) at (x1r) {$x_1^{\mathrm{R}}$};
        \node[anchor=south] (label1) at ($(x1l)!0.5!(x1r)$) {$\ell_1$};
        \draw[dotted,|-|] ($(x1r)+(0,0.25)$) -- ($(x2l)+(0,0.25)$);
        \draw[ultra thick] (x2l) -- (x2r);      
        \node[anchor=north] (label1) at (x2l) {$x_2^{\mathrm{L}}$};
        \node[anchor=north] (label1) at (x2r) {$x_2^{\mathrm{R}}$};
        \node[anchor=south] (label1) at ($(x2l)!0.5!(x2r)$) {$\ell_2$};
        \draw[dotted,|-|] ($(x2r)+(0,0.25)$) -- ($(x3l)+(0,0.25)$);
        \draw[ultra thick] (x3l) -- (x3r);      
        \node[anchor=north] (label1) at (x3l) {$x_3^{\mathrm{L}}$};
        \node[anchor=north] (label1) at (x3r) {$x_3^{\mathrm{R}}$};
        \node[anchor=south] (label1) at ($(x3l)!0.5!(x3r)$) {$\ell_3$};
        \node (dots1) at (dots1) {\dots};
        \draw[ultra thick] (x4l) -- (x4r);      
        \node[anchor=north] (label1) at (x4l) {$x_k^{\mathrm{L}}$};
        \node[anchor=north] (label1) at (x4r) {$x_k^{\mathrm{R}}$};
        \node[anchor=south] (label1) at ($(x4l)!0.5!(x4r)$) {$\ell_k$};
        \draw[dotted,|-|] ($(x4r)+(0,0.25)$) -- ($(x5l)+(-.25,0.25)$);
        \draw[ultra thick] (x5l) -- (x5r);       
        \node[anchor=north] (label1) at (x5l) {$x_{k+1}^{\mathrm{L}}$};
        \node[anchor=north] (label1) at (x5r) {$x_{k+1}^{\mathrm{R}}$};
        \node[anchor=south] (label1) at ($(x5l)!0.5!(x5r)$) {$\ell_{1}$};
        \node (dots2) at (dots2) {\dots};
        \draw[ultra thick] (x6l) -- (x6r);      
        \node[anchor=north] (label1) at (x6l) {$x_{N-2}^{\mathrm{L}}$};
        \node[anchor=north] (label1) at (x6r) {$x_{N-2}^{\mathrm{R}}$};
        \node[anchor=south] (label1) at ($(x6l)!0.5!(x6r)$) {$\ell_{k-2}$};
        \draw[dotted,|-|] ($(x6r)+(0,0.25)$) -- ($(x7l)+(0,0.25)$);
        \draw[ultra thick] (x7l) -- (x7r);      
        \node[anchor=north] (label1) at (x7l) {$x_{N-1}^{L}$};
        \node[anchor=north] (label1) at (x7r) {$x_{N-1}^{R}$};
        \node[anchor=south] (label1) at ($(x7l)!0.5!(x7r)$) {$\ell_{k-1}$};
        \draw[dotted,|-|] ($(x7r)+(0,0.25)$) -- ($(x8l)+(0,0.25)$);
        \draw[ultra thick] (x8l) -- (x8r);      
        \node[anchor=north] (label1) at (x8l) {$x_{N}^{\mathrm{L}}$};
        \node[anchor=north] (label1) at (x8r) {$x_{N}^{\mathrm{R}}$};
        \node[anchor=south] (label1) at ($(x8l)!0.5!(x8r)$) {$\ell_{k}$};
        \node[anchor=north] (label1) at (s1) {$s_1$};
        \node[anchor=north] (label1) at (s2) {$s_2$};
        \node[anchor=north] (label1) at (s4) {$s_k$};
        \node[anchor=north] (label1) at (s6) {$s_{k-2}$};
        \node[anchor=north] (label1) at (s7) {$s_{k-1}$};
    \end{tikzpicture}
    \end{adjustbox}
    \caption{A chain of $N$ times periodically repeated $k$ subwavelength resonators, with lengths
    $(\ell_i)_{1\leq i\leq N}$ and spacings $(s_{i})_{1\leq i\leq N-1}$.}
    \label{fig:setting}
\end{figure}
\noindent
Let $D$ be the domain consisting of the union of all resonators
\begin{equation}
    D = \bigcup_{i =1}^N (x_i^{\mathrm{L}}, x_i^{\mathrm{R}}) \subset \R.
\end{equation}
The underlying wave equation that governs the system is given by 
\begin{equation}\label{eq: wave equation}
\begin{cases}u^{\prime \prime}(x)+\gamma u^{\prime}(x)+\frac{\omega^2}{v^2} u=0, & x \in D, \\ u^{\prime \prime}(x)+\frac{\omega^2}{v_b^2} u=0, & x \in \mathbb{R} \backslash D, \\ \left.u\right|_{\mathrm{R}}\left(x_i^{\mathrm{L}, \mathrm{R}}\right)-\left.u\right|_{\mathrm{L}}\left(x_i^{\mathrm{L}, \mathrm{R}}\right)=0, & \text { for all } 1 \leq i \leq N, \\ \left.\frac{\mathrm{d} u}{\mathrm{~d} x}\right|_{\mathrm{R}}\left(x_i^{\mathrm{L}}\right)=\left.\delta \frac{\mathrm{d} u}{\mathrm{~d} x}\right|_{\mathrm{L}}\left(x_i^{\mathrm{L}}\right), & \text { for all } 1 \leq i \leq N, \\  \left.\frac{\mathrm{d} u}{\mathrm{~d} x}\right|_{\mathrm{L}}\left(x_i^{\mathrm{R}}\right)=\left. \delta \frac{\mathrm{d} u}{\mathrm{~d} x}\right|_{\mathrm{R}}\left(x_i^{\mathrm{R}}\right), & \text { for all } 1 \leq i \leq N, \\ \frac{\mathrm{d} u}{\mathrm{~d}|x|}-\mathrm{i} \frac{\omega}{v} u=0, & \text{ at } x= x_1^{\mathrm L}, x_N^{\mathrm R}, \text{ respectively},
\end{cases}
\end{equation}
where $0 < \delta \ll 1$, the wave speeds inside the resonator and in the background are respectively $v_b$ and $v$ and are of order one, and $\gamma u'$ for $\gamma \in \mathbb{R}^*$ denotes the imaginary gauge potential. Here, for a function $w$ we denote by
$$ \left. w \right|_{\mathrm{L}} (x) := \lim_{s\rightarrow 0, s>0}
w(x-s) \text{ and }  \left. w \right|_{\mathrm{R}} (x) := \lim_{
s\rightarrow 0, s>0} w(x +s)$$
if the limits exist.

We are interested in nontrivial frequencies $\omega$ such that (\ref{eq: wave equation}) is satisfied and $\omega \rightarrow 0$ as $\delta\rightarrow 0$. Such $\omega$ are called subwavelength resonances.

In the high-contrast, low-frequency setting described above, gauge capacitance matrices provide a discrete approximation of the subwavelength eigenfrequencies induced in the structure.
\begin{definition}
    For $\gamma \in \mathbb{R}^*$, we define the \emph{gauge capacitance matrix} $C^\gamma \in \mathbb{R}^{N\times N}$ by
\begin{equation}\label{capactitance matrix definition}
    C^\gamma_{i,j} = \begin{cases} \frac{\gamma}{s_1}\frac{\ell_1}{1-e^{-\gamma \ell_1}}, & i = j = 1, \\
    \frac{\gamma}{s_i}\frac{\ell_i}{1-e^{-\gamma \ell_i}} - \frac{\gamma}{s_{i-1}} \frac{\ell_i}{1-e^{\gamma \ell_i}}, & 1 < i = j < N, \\
    -\frac{\gamma}{s_i}\frac{\ell_i}{1-e^{-\gamma \ell_j}}, & 1 \leq i = j-1  \leq N-1, \\
    \frac{\gamma}{s_j}\frac{\ell_i}{1-e^{\gamma \ell_j}}, & 2 < i = j+1 \leq N, \\
    -\frac{\gamma}{s_{N-1}} \frac{\ell_N}{1-e^{\gamma \ell_N}}, & i = j = N,
    \end{cases},
\end{equation}
and all the other entries are zero. 
\end{definition}
In \cite[Corollary 2.6]{ammari2023mathematical}, one can find a formal argumentation that the gauge capacitance matrix gives indeed a valid approximation of the eigenmodes in a subwavelength regime as $\delta \rightarrow 0$.
For the remainder of this paper, we assume that $\gamma = \ell_i = 1$. Without loss of generality, we consider a system of $nk$ resonators and obtain a gauge capacitance matrix of the tridiagonal $k$-Toeplitz form (perturbed):
\begin{equation}\label{eq: k-capacitance matrix nk+1}
    \mathbf{A}^{(a,b)}_{nk} = \begin{pmatrix}
        a_1 + a & b_1 & 0 &\cdots  & \cdots & 0 \\
        c_1 & a_2 & b_2 & \ddots  &  & \vdots  \\
        0 & \ddots &  \ddots & \ddots & \ddots & \vdots  \\
        \vdots & \ddots&c_{k-3} & a_{k-2} & b_{k-2} & 0 \\
        \vdots& &  \ddots & c_{k-2} & a_{k-1}& b_{k-1}  \\
        0 & \cdots & \cdots & 0 & c_{k-1} & a_k+ b 
    \end{pmatrix}.
\end{equation}
For the semi-infinite systems we obtain a perturbed tridiagonal $k$-Toeplitz operator 
\[
T^{a}(f) = \begin{pmatrix}
        a_1 + a & b_1 &   \\
        c_1 & a_2 & b_2 &   \\
         & \ddots &  \ddots & \ddots  \\
        & &c_{k-1} & a_{k} & b_{k}  \\
        & &   & c_{k} & a_{1}& b_{1}  \\
         & & &  & \ddots & \ddots& \ddots
        \end{pmatrix}
\]
with $f$ being the symbol (\ref{eq: symbol tridiagonal operator}).

It has been observed and conjectured \cite{dimerSkin} that the exponential decay of the eigenvectors of $k$-Toeplitz matrices for $k \geq 2$ is due to the winding of the eigenvalues of the symbol being nontrivial. Now, by the spectral theory for the tridiagonal $k$-Toeplitz operators introduced in the previous sections, we have the following results for the topological origin of the skin effect in  subwavelength resonator systems, validating the conjecture made in \cite{dimerSkin}. 
\begin{theorem}\label{thm: exponential_decay_capacitance_operators}
        Suppose $\Pi_{j=1}^k c_j\neq 0$ and $\Pi_{j=1}^k b_j\neq 0$. Let $f(z) \in \mathbb{C}^{k\times k}$ be the symbol (\ref{eq: symbol tridiagonal operator}) and let $\lambda \in \C\setminus\sigma_{ess}(T^{a}(f))$. If $\sum_{j=1}^k\operatorname{wind}(\lambda_j(\mathbb T), \lambda)<0$, then there exists an eigenvector $\bm x$ of $T^{a}(f)$ associated to $\lambda$ and some $\rho<1$ such that 
    \begin{equation}\label{eq: bound on eigenvectorN}
        \frac{\lvert \bm x_j\rvert}{\max_{i}\lvert \bm x_i\rvert} \leq C \lceil j/k\rceil \rho^{\lceil j/k\rceil-1},
    \end{equation}
   where $C>0$ is a constant depending only on $\lambda, a_j, b_j, c_j, j=1,\cdots, k$. If $\sum_{j=1}^k\operatorname{wind}(\lambda_j(\mathbb T), \lambda)>0$, then the above results hold for the left eigenvectors. 
\end{theorem}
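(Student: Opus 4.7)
The strategy is to adapt the proof of Theorem \ref{thm: exponential_decay_k_operators} to accommodate the rank-one boundary perturbation, and to handle the positive-winding case by passing to the transpose. The key observation is that $T^a(f)$ differs from $T(f)$ only in the $(1,1)$-entry, so the three-term recursion satisfied by any eigenvector in the bulk is identical to that of $T(f)$; only the boundary condition at the first block row is affected.

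For the case $\sum_{j=1}^k\operatorname{wind}(\lambda_j(\mathbb T), \lambda) < 0$, I would reuse the ansatz of Theorem \ref{thm: exponential_decay_k_operators}: the argument principle applied to $\det(f(z)-\lambda)$ yields two roots $z_1, z_2$ (counted with multiplicity) with $|z_i| \geq 1/\rho > 1$ and corresponding matrix eigenvectors $\vect v_i$ of $f(z_i)$, and the vectors $\vect u_i = (\vect v_i^{\top}, z_i^{-1}\vect v_i^{\top}, z_i^{-2}\vect v_i^{\top}, \ldots)^{\top}$ satisfy $(T^a(f)-\lambda)\vect u_i = 0$ in every block row except the first. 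For a linear combination $\vect x = a_1 \vect u_1 + a_2 \vect u_2$, the first block row equation reads
\[
\sum_{i=1}^2 a_i \bigl((A_0 + a\,\vect e_1\vect e_1^{\top} + A_{-1} z_i^{-1} - \lambda)\vect v_i\bigr) = \vect 0,
\]
and, using $f(z_i)\vect v_i = \lambda \vect v_i$ to substitute $(A_0 + A_{-1} z_i^{-1} - \lambda)\vect v_i = -A_1 z_i \vect v_i$, collapses to the single scalar condition
\[
\left(\sum_{i=1}^2 a_i \bigl(-c_k z_i v_i^{(k)} + a\, v_i^{(1)}\bigr)\right)\vect e_1 = \vect 0.
\]
One scalar equation in two unknowns admits a nontrivial solution, yielding $\vect x \in \ell^2$ whose decay (\ref{eq: bound on eigenvectorN}) is inherited from the exponential factors $|z_i|^{-\lceil j/k\rceil} \leq \rho^{\lceil j/k\rceil}$. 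The degenerate situations $z_1 = z_2$, with or without a multi-dimensional eigenspace of $f(z_1)$, would be handled exactly as in Theorem \ref{thm: exponential_decay_k_operators}; the additive term $a\,\vect e_1^{\top} \vect v_i$ enters only in the final boundary equation and does not interfere with the limiting procedure that constructs the generalised eigenvector.

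For the case $\sum_{j=1}^k \operatorname{wind}(\lambda_j(\mathbb T), \lambda) > 0$, I would apply the statement just proven to the transposed operator $(T^a(f))^{\top} = T^a(\tilde f)$, whose symbol $\tilde f(z) = f(z^{-1})^{\top}$ is again a tridiagonal $k$-Toeplitz symbol (with the roles of the $b_i$ and $c_i$ interchanged by transposition) and whose perturbation is still at the $(1,1)$-entry. Since $z \mapsto z^{-1}$ reverses the orientation of $\mathbb T$, the eigenvalues of $\tilde f$ have winding of opposite sign to those of $f$ relative to $\lambda$, so the negative-winding hypothesis holds for $\tilde f$. The resulting exponentially decaying right eigenvector $\vect y$ of $T^a(\tilde f)$ is precisely a left eigenvector of $T^a(f)$ with eigenvalue $\lambda$ and the stated decay.

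The main obstacle I anticipate lies in the degenerate case $z_1 = z_2$ with a one-dimensional $f(z_1)$-eigenspace: one must verify that the limiting construction of Theorem \ref{thm: exponential_decay_k_operators}, which produced a generalised ansatz as a limit of difference quotients $(\vect w_1(\epsilon_j) - \vect w_2(\epsilon_j))/\Delta z(\epsilon_j)$, still yields a nontrivial solution to the perturbed scalar boundary equation, rather than collapsing to the trivial one once the $a v_i^{(1)}$ correction is included.
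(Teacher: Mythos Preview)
Your proposal is correct and follows precisely the paper's approach, which consists of the single sentence ``Although with a perturbation $a$ on the first element, the proof of Theorem \ref{thm: exponential_decay_k_operators} can still be applied.'' You have in fact supplied considerably more detail than the paper itself, correctly isolating that the perturbation only modifies the first-row boundary equation while leaving the bulk recursion (and hence the construction of $\vect u_1,\vect u_2$) untouched.

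The obstacle you flag in the degenerate case is not a genuine one: the limiting construction of $\vect v_2$ in Theorem \ref{thm: exponential_decay_k_operators} depends only on the bulk recurrence, so it is entirely unaffected by the perturbation $a$; once $\vect u_1,\vect u_2$ are in hand they are linearly independent in $\ell^2$ (even if $\vect v_2$ happened to be proportional to $\vect v_1$, the generalised ansatz $\vect u_2$ still differs from any multiple of $\vect u_1$), and the perturbed boundary condition remains a \emph{single} scalar equation in the two unknowns $(\alpha_1,\alpha_2)$, hence always admits a nontrivial solution.
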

\begin{proof} Although with a perturbation $a$ on the first element, the proof of Theorem \ref{thm: exponential_decay_k_operators} can still be applied. 
\end{proof}

Theorem \ref{thm: exponential_decay_capacitance_operators} elucidates the topological origin of the skin effect in the polymer system of subwavelength resonators. In particular, the skin effect holds for all $\lambda$ in the region 
\begin{equation}\label{equ:nonzerowindingregion1}
G:=\left\{ \lambda \in \mathbb{C}\setminus \sigma_{\mathrm{det}}(f) :\sum_{j=1}^k\operatorname{wind}(\lambda_j(\mathbb T), \lambda)\neq 0 \right\}.
\end{equation}
This is a generalization for the topological origin of the skin effect in the Toeplitz operator case given in \cite{ammari2023mathematical}.



The last part of the section is devoted to illustrating numerically the skin effect and its topological origin in chains of $2$ and $3$ periodically repeated resonators.
We start by illustrating in Figure \ref{Fig: spectrum_dimer_system} the results of a 
system of $2$ periodically repeated resonators as in \cite{dimerSkin}. In Figure \ref{Fig: spectrum_dimer_system}A,
we show the spectrum and pseudospectrum of the gauge capacitance matrix of a system of $25$ dimers together with the winding of the two eigenvalues of the symbol of the corresponding 2-Toeplitz operator. Figure \ref{Fig: spectrum_dimer_system}B shows that all the eigenvectors (black eigenvectors) associated with eigenvalues inside the region $G$ in (\ref{equ:nonzerowindingregion1}) are localised and the only non-decreasing eigenvector (gray eigenvector) corresponds to the eigenvalue $0$ in the boundary of the region $G$. On the other hand, in \cite{dimerSkin} it is observed and conjectured that the non-trivial winding of the eigenvalues $\lambda_j(z)$ predicts the exponential decay of the eigenmodes. This is due to  $\operatorname{wind}(\lambda_j(\mathbb T), \lambda)\leq 0, j=1,2$ in the example, which yields 
\[
G = \left\{ \lambda \in \mathbb{C}\setminus \sigma_{\mathrm{det}}(f) :  \bigcup_{j=1}^2 \operatorname{wind}(\lambda_j(\mathbb T), \lambda)\neq 0\right\}.  
\]

\begin{figure}[h]
    \centering
    \subfloat[][The region of non-trivial winding of the \\eigenvalues and the pseudo-spectrum.]{{\includegraphics[width=6.3cm]{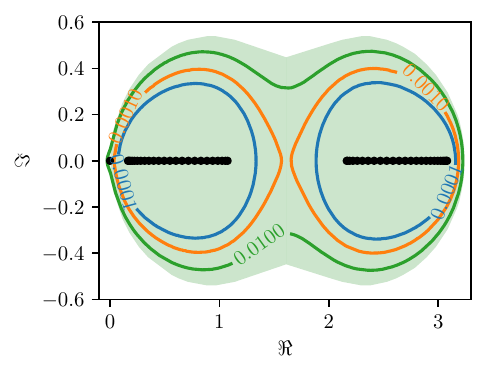} }}
    \qquad
    \subfloat[][Eigenmodes superimposed on one another to portray the skin effect. ]{{\includegraphics[width=6.3cm]{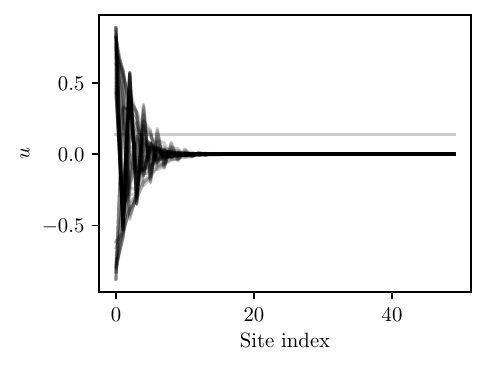} }}%
    \caption{The region of $\lambda$ so that $\sum_{j=1}^k\operatorname{wind}(\lambda_j(\mathbb T), \lambda)\neq 0$ and the localization of the eigenvectors. Computation performed for $s_1 = 1, s_2 = 2,$ and $N = 50$.}
   \label{Fig: spectrum_dimer_system}
\end{figure}

%

\begin{figure}[h]
    \centering
    \subfloat[][Computation performed for $s_1 = 1, s_2 = 2, s_3 = 3,$ and $N = 50$. ]{{\includegraphics[width=6.3cm]{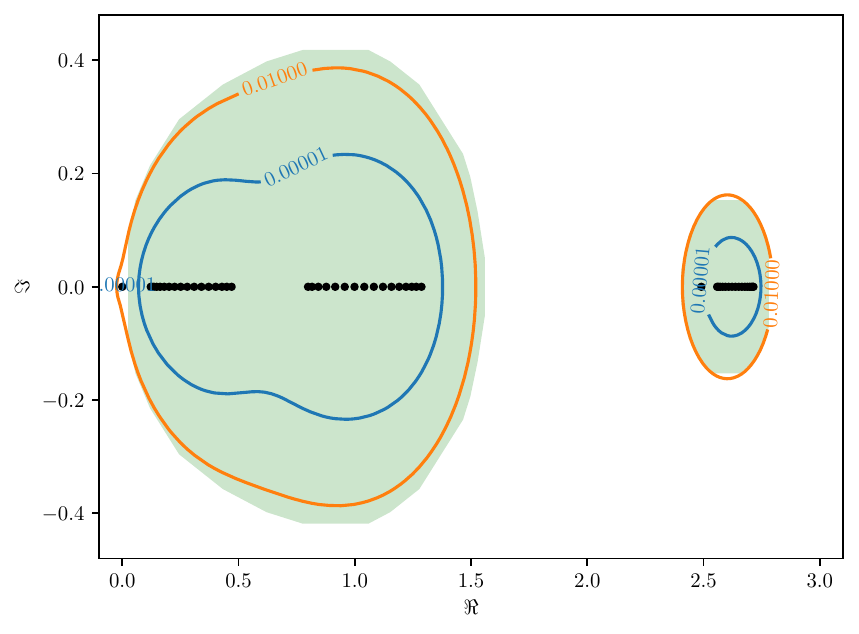} }}
    \qquad
    \subfloat[][ Simulation performed with $s_1 = 1, s_2 =2, s_3 = 3,$ and $N = 50$.]{{\includegraphics[width=6.3cm]{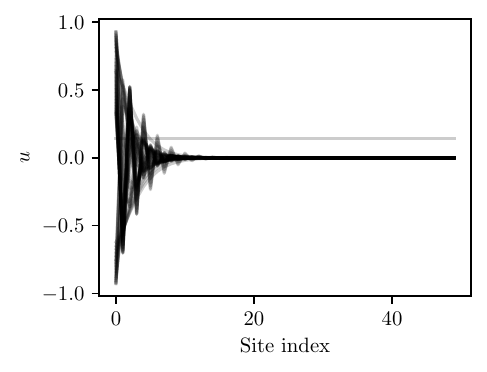} }}
    \qquad
    \subfloat[][Computation performed for $s_1 = 2, s_2 = 3, s_3 = 4,$ and $N = 50$. ]{{\includegraphics[width=6.3cm]{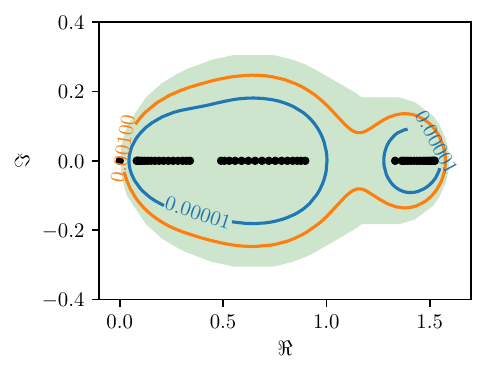} }}%
    \qquad
    \subfloat[][Simulation performed with $s_1 = 2, s_2 =3, s_3 = 4,$ and $N = 50$.]{{\includegraphics[width=6.3cm]{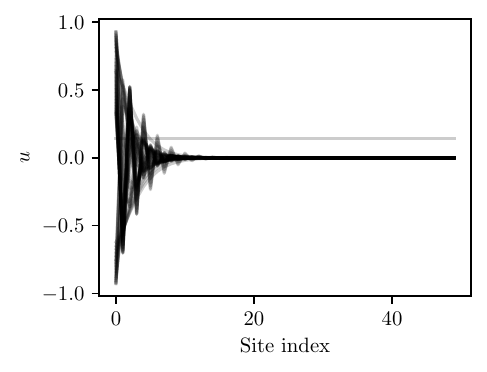} }}%
    \caption{Figures A and C show the spectrum of the operator. The green regions consist of all the eigenvalues $\lambda$ that satisfy $\sum_{j=1}^3 \operatorname{wind}(\lambda_j(\mathbb T), \lambda)\neq 0$. The black dots along the real line denote the spectrum of the gauge capacitance matrix $C^\gamma$ and the solid blue and orange lines around the spectrum are the $\varepsilon$-pseudospectra for $\varepsilon = 10^{k}$ and $k = -5, -2$. Figures B and D show the eigenvectors of $C^{\gamma}$.}
   \label{Fig: spectrum_trimer_system}
\end{figure}

Figure \ref{Fig: spectrum_trimer_system} illustrates the results for $3$ periodically repeated resonators. We numerically verify that indeed all the eigenvectors, except the one associated with eigenvalue $0$ on the boundary of $G$, are localised at the left edge of the structure.




\begin{remark}
We remark that the green regions for $k=1,2,3,4$ can be computed analytically through the formula (\ref{equ:windspectraformula3}). For the case when $k\geq 5$, one can utilize (\ref{equ:windspectraformula3}) to compute numerically the green region. 
\end{remark}

\section{Concluding remarks}\label{sec: concluding_remarks}
We have developed new theories of spectra and pseudo-spectra for tridiagonal $k$-Toeplitz operators and matrices. Specifically, we have established the relationship between the winding number of the symbol function's eigenvalues and the exponential decay property of the corresponding eigenvectors (or pseudo-eigenvectors). This discovery sheds light on the topological origin of the non-Hermitian skin effect observed in one-dimensional polymer systems of subwavelength resonators. This paper also opens the door to the study of the spectral theory of block Toeplitz operators. 


\section*{Acknowledgements}
This work was partially supported by Swiss National Science Foundation grant number 200021-200307.

\section*{Code Availability}
The data that support the findings of this work are openly available at \url{https://doi.org/10.5281/zenodo.10438679}.

\section*{Conflict of interest} 
The authors have no competing interests to declare that are relevant to the content of this
article.


\appendix

\section{Computation of the determinant}\label{appendix: expansion_of_determinant}
\begin{lemma}\label{lem:tridiagonaldeterminant}
    For $\lambda \in \C\setminus\sigma_{ess}(T(f))$, the determinant $\det(f(z)-\lambda)$, where $f$ is the symbol of the tridiagonal Toeplitz operator $T$ given in (\ref{eq: symbol tridiagonal operator}), has the form
    \begin{equation}\label{equ:detexpansion1}
        \det(f(z)-\lambda) = (-1)^{k+1}(\prod_{i = 1}^k c_i) z + (-1)^{k+1}(\prod_{i = 1}^k b_i) z^{-1} + g(\lambda),
    \end{equation}
    where $g$ is a polynomial in $\lambda$ of degree $k$, defined by (\ref{equ:defiglambda1}). In particular, there are at most $2k$ $\lambda\in \mathbb C$ so that $\det(f(z)-\lambda) =0$ admits a double root. 
\end{lemma}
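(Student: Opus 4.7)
The plan is to use the permutation-sum expansion of the determinant, exploiting the fact that the $z$-dependence of $f(z)$ lives in only two entries: $(1,k)$ (equal to $c_k z$) and $(k,1)$ (equal to $b_k z^{-1}$).

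First I would write
\[
\det(f(z)-\lambda) = \sum_{\sigma \in S_k} \sign(\sigma)\prod_{i=1}^k (f(z)-\lambda)_{i,\sigma(i)}
\]
and split $S_k$ into four classes according to the values of $\sigma(1)$ and $\sigma(k)$: (a) $\sigma(1)=k$ and $\sigma(k)=1$; (b) $\sigma(1)=k$ and $\sigma(k)\neq 1$; (c) $\sigma(1)\neq k$ and $\sigma(k)=1$; (d) $\sigma(1)\neq k$ and $\sigma(k)\neq 1$. Since $f(z)-\lambda$ is a tridiagonal matrix modulo the two corner entries, only permutations whose graph is compatible with the sparsity pattern contribute. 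In case (b), tracking the forced choices row by row ($\sigma(k)\in\{k-1,k\}$, and $k$ is taken, so $\sigma(k)=k-1$, then $\sigma(k-1)=k-2$, etc.) pins down a unique $k$-cycle with product $c_kz\cdot c_1\cdots c_{k-1}$ and sign $(-1)^{k-1}$. Case (c) analogously pins down the inverse cycle with product $b_1\cdots b_{k-1}\cdot b_kz^{-1}$ and sign $(-1)^{k-1}$.

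In case (a), the $(1,k)$ and $(k,1)$ factors multiply to $c_kb_k$ (independent of $z$), and the remaining factors come from the $(k-2)\times(k-2)$ tridiagonal block indexed by rows and columns $2,\ldots,k-1$; this contributes $-c_kb_k \cdot p_{k-2}(\lambda)$, where $p_{k-2}$ is a polynomial of degree $k-2$ in $\lambda$. In case (d) only entries from the pure tridiagonal part of $f(z)-\lambda$ contribute, yielding the usual tridiagonal characteristic-type determinant $p_k(\lambda)$, a polynomial of degree $k$ in $\lambda$ with leading term $(-\lambda)^k$. Collecting,
\begin{equation}\label{equ:defiglambda1}
g(\lambda) := p_k(\lambda) - c_kb_k\, p_{k-2}(\lambda),
\end{equation}
which is a polynomial in $\lambda$ of degree $k$, and \eqref{equ:detexpansion1} follows with the asserted coefficients.

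For the final claim, multiplying $\det(f(z)-\lambda)=0$ by $z$ gives the quadratic in $z$:
\[
(-1)^{k+1}\Big(\prod_{i=1}^k c_i\Big) z^2 + g(\lambda)\, z + (-1)^{k+1}\Big(\prod_{i=1}^k b_i\Big) = 0,
\]
whose double root occurs precisely when its discriminant vanishes, i.e.
\[
g(\lambda)^2 - 4\,\Big(\prod_{i=1}^k c_i\Big)\Big(\prod_{i=1}^k b_i\Big) = 0.
\]
Since $g$ has degree $k$, the left-hand side is a polynomial in $\lambda$ of degree $2k$, and therefore admits at most $2k$ roots. I expect the main bookkeeping obstacle to be verifying in case (b) that the forced choices really determine a unique cyclic permutation and computing its sign correctly; the rest is routine cofactor analysis.
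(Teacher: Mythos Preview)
Your argument is correct and arrives at exactly the same formula for $g(\lambda)$ as the paper, namely $g(\lambda)=\det(A_0-\lambda)-b_kc_k\,p(\lambda)$ with $p$ the characteristic-type determinant of the inner $(k-2)\times(k-2)$ tridiagonal block. The only difference is the computational device: the paper proceeds by repeated Laplace (cofactor) expansion along the last row and first column to peel off the $z$, $z^{-1}$, and $z$-free pieces, whereas you go straight to the Leibniz permutation sum and classify permutations by whether they hit the two corner entries. Your route is arguably a bit cleaner, since the four-case split makes it immediately transparent why exactly one permutation carries $z$, exactly one carries $z^{-1}$, and all others are $z$-independent; the paper's expansion accomplishes the same thing but requires tracking several intermediate minors. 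The treatment of the double-root count via the discriminant of the quadratic in $z$ is identical to the paper's.
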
 
\begin{proof}
The case when $k=1,2$ is easy to verify. For the case when $k\geq3$, by repeated Laplace expansion, one can show that
    \begin{align*}
        \det(f(z)-\lambda) &= (-1)^{k+1}b_1 b_k 
        \begin{vmatrix}
            b_2 & 0 & \dots & \dots & 0 \\
            a_3-\lambda & \ddots & \ddots & & \vdots \\
            c_3 & \ddots & \ddots & \ddots & \vdots\\
            \vdots & \ddots & \ddots & \ddots & 0 \\
            0 & \dots & c_{k-2} & a_{k-1}-\lambda & b_{k-1}\\
        \end{vmatrix}
        z^{-1} \\
        &\qquad\qquad+ (-1)^{k+1} c_kc_{k-1}
        \begin{vmatrix}
            c_1 & a_2 - \lambda & b_2 &0 & \dots & 0 \\
            0 & c_2 & a_3 - \lambda & b_3 & \ddots & \vdots \\
            \vdots & \ddots & \ddots & \ddots & \ddots & 0 \\
            \vdots  & & \ddots&\ddots & \ddots & b_{k-3} \\
            \vdots & &  & \ddots & c_{k-3} & a_{k-2} - \lambda  \\
            0 & \dots&\dots & \dots & 0& c_{k-2}  \\
        \end{vmatrix}
        z
        \end{align*}
        \begin{align*}
        &+ (a_k-\lambda)
        \begin{vmatrix}
            a_1 - \lambda & b_1 & 0 & \dots & 0 \\
            c_1 & a_2 - \lambda & b_2 & \ddots & \vdots \\
            0 & \ddots & \ddots & \ddots & 0 \\
            \vdots & \ddots& \ddots & \ddots & b_{k-1} \\
            0 & \dots &  0 & c_{k-2} & a_{k-1} - \lambda \\
        \end{vmatrix}
        \\
        &\qquad\qquad-b_{k-1}c_{k-1}
        \begin{vmatrix}
            a_1 - \lambda & b_1 & 0 & \dots & 0 \\
            c_1 & a_2 - \lambda & b_2 & \ddots & \vdots \\
            0 & \ddots & \ddots & \ddots & 0 \\
            \vdots & \ddots & \ddots & \ddots & b_{k-3} \\
            0 & \dots & 0 & c_{k-3} & a_{k-2} - \lambda \\
        \end{vmatrix}
        \\
        &\qquad\qquad\qquad\qquad-b_kc_k
        \begin{vmatrix}
            a_2 - \lambda & b_2 & 0 & \dots & 0 \\
            c_3 & a_3 - \lambda & b_3 & \ddots & \vdots \\
            0 & \ddots & \ddots & \ddots & 0 \\
            \vdots & \ddots & \ddots & \ddots & b_{k-2} \\
            0 & \dots & 0 & c_{k-1} & a_{k-1} - \lambda \\
        \end{vmatrix}.
    \end{align*}
    Hence (\ref{equ:detexpansion1}) holds for
    \begin{equation}\label{equ:defiglambda1}
        g(\lambda) = \det(A_0-\lambda)  -b_kc_k p(\lambda)  ,
    \end{equation}
    where
    \[
p(\lambda) = \begin{cases}
0, & k=1, \\
1, & k=2,\\
\begin{vmatrix}
            a_2 - \lambda & b_2 & 0 & \dots & 0 \\
            c_3 & a_3 - \lambda & b_3 & \ddots & \vdots \\
            0 & \ddots & \ddots & \ddots & 0 \\
            \vdots & \ddots & \ddots & \ddots & b_{k-2} \\
            0 & \dots & 0 & c_{k-1} & a_{k-1} - \lambda \\
        \end{vmatrix}, & k \geq 3.
\end{cases}
    \]
     To demonstrate the last claim, we write
     \begin{equation}\label{equ:proofeigenvector1}
         \det(f(z)-\lambda) = (Az + Bz^{-1}) + g(\lambda)
     \end{equation}
     with $A = (-1)^{k+1}(\prod_{i = 1}^k c_i), B = (-1)^{k+1}(\prod_{i = 1}^k b_i)$. By multiplication with $z$, we obtain that
     \[
      z\det(f(z)-\lambda) = Az^2 + B + g(\lambda)z.
     \]
     Thus only when
     \[
     g(\lambda)^2-4AB =0, 
     \]
     we have a double root. Since $g(\lambda)$ is a polynomial of order $k$, we have at most $2k$ solutions $\lambda$ of the above equation so that the root is a double root.
\end{proof}

\printbibliography

\end{document}



Last, we characterize the condensation of the pseudo-eigenvectors of $\vect A_{N}$.